\tikzset{black node/.style={draw, circle, fill = black, minimum size = 5pt, inner sep = 0pt}}
\tikzset{normal/.style = {draw=none, fill = none, minimum size =0, rectangle}}
\newtheorem{theorem}{Theorem}[section]
\newaliascnt{lemma}{theorem}
\newtheorem{lemma}[lemma]{Lemma}
\newaliascnt{corollary}{theorem}
\newtheorem{corollary}[corollary]{Corollary}
\newaliascnt{conjecture}{theorem}
\newtheorem{conjecture}[conjecture]{Conjecture}
\newaliascnt{claim}{theorem}
\newtheorem{claim}[claim]{Claim}
\newcommand{\R}{\mathbb{R}}
\newcommand{\N}{\mathbb{N}}
\DeclareMathOperator{\tw}{\mathsf{tw}}
\newcommand{\mc}{\mathcal}
\newcommand{\EP}{Erd\H{o}s-P\'osa}
\newcommand{\Gsmall}{H_{\mathrm{s}}}
\newcommand{\Gbig}{H_{\mathrm{b}}}
\newcommand{\epconstant}{\gamma}
\begin{document}

\title{A tight Erd\H{o}s-P\'osa function for wheel minors}

\author[P.~Aboulker]{Pierre Aboulker}
\author[S.~Fiorini]{Samuel Fiorini}
\author[T.~Huynh]{Tony Huynh}
\address[P.~Aboulker, S.~Fiorini, T.~Huynh]{\newline D\'epartement de Math\'ematique
\newline Universit\'e Libre de Bruxelles
\newline Brussels, Belgium}
\email{\{pierreaboulker@gmail.com, sfiorini@ulb.ac.be, tony.bourbaki@gmail.com\}}

\author[G.~Joret]{Gwena\"el Joret}
\address[G.~Joret]{\newline D\'epartement d'Informatique
\newline Universit\'e Libre de Bruxelles
\newline Brussels, Belgium}
\email{gjoret@ulb.ac.be}

\author[J.-F.~Raymond]{Jean-Florent Raymond}
\address[J.-F.~Raymond]{\newline Logic and Semantics Research Group
\newline Technische Universität Berlin
\newline Berlin, Germany}
\email{raymond@tu-berlin.de}

\author[I.~Sau]{Ignasi Sau}
\address[I.~Sau]{\newline CNRS, LIRMM, Universit\'e de Montpellier 
\newline Montpellier, France}
\email{ignasi.sau@lirmm.fr}

\thanks{G.\ Joret is supported by an ARC grant from the Wallonia-Brussels Federation of Belgium. J.-F.\ Raymond is supported by ERC Consolidator Grant 648527-DISTRUCT and has been supported by the Polish National Science Centre grant PRELUDIUM DEC-2013/11/N/ST6/02706. P. Aboulker, S. Fiorini, and T. Huynh are supported by ERC Consolidator Grant 615640-ForEFront.}

\date{\today}
\sloppy

\begin{abstract}
Let $W_t$ denote the wheel on $t+1$ vertices.  We prove that for every integer $t \geq 3$ there is a constant $c=c(t)$ such that for every integer $k\geq 1$ and every graph $G$, either $G$ has $k$ vertex-disjoint subgraphs each containing $W_t$ as a minor, or there is a subset $X$ of at most $c k \log k$ vertices such that $G-X$ has no $W_t$ minor. 
This is best possible, up to the value of $c$. 
We conjecture that the result remains true more generally if we replace $W_t$ with any fixed planar graph $H$. 
\end{abstract}

\maketitle

\section{Introduction}

Let $H$ be a fixed graph. 
An {\em $H$-model} $\mc{M}$ in a graph $G$ is a collection
$\{S_{x} \subseteq G: x\in V(H)\}$ of vertex-disjoint connected subgraphs of $G$
such that $S_{x}$ and $S_{y}$ are linked by an edge in $G$ for every edge $xy \in E(H)$.
The \emph{vertex set} $V(\mc{M})$ of $\mc{M}$ is the union of the vertex sets of the subgraphs in the collection.
Two $H$-models $\mc{M}$ and $\mc{M'}$ are \emph{disjoint} if $V(\mc{M}) \cap V(\mc{M'}) = \varnothing$.

Let $\nu_{H}(G)$ be the maximum number of pairwise disjoint $H$-models in $G$.
Let $\tau_{H}(G)$ be the minimum size of a subset $X \subseteq V(G)$ such that $G - X$
has no $H$-model. 
Clearly, $\nu_{H}(G) \leq \tau_{H}(G)$.  We say that the \emph{\EP{} property holds for $H$-models} if there exists a \emph{bounding function} $f \colon \N \to \mathbb{R}$ such that 
\[
\tau_H (G) \leq f(\nu_H(G)) 
\]
for every graph $G$.

Robertson and Seymour~\cite{robertson1986graph} proved that the \EP{} property holds for $H$-models if and only if $H$ is planar. 
Their original bounding function was exponential.  
However, this has been significantly improved by recent breakthrough results of Chekuri and Chuzhoy \cite{chekuri2013large, chekuri2016polynomial} on the polynomial Grid Theorem. 

\begin{theorem}[Chekuri and Chuzhoy~\cite{chekuri2013large}]
\label{thm:ccep} 
There exist integers $a,b,c \geq 0$ such that for every planar graph $H$ on $h$ vertices, the  \EP{} property holds for $H$-models with bounding function 
\[
f(k) = a h^b \cdot k \log^c (k+1).
\]
\end{theorem}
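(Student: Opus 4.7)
The plan is to derive the theorem from the polynomial Grid Theorem of Chekuri and Chuzhoy via a dichotomy on the treewidth of $G$. Recall that the polynomial Grid Theorem guarantees a polynomial $p$ such that every graph of treewidth at least $p(r)$ contains the $(r\times r)$-grid as a minor. Since $H$ is planar on $h$ vertices, standard embedding results imply that $H$ is a minor of an $O(h) \times O(h)$ grid, and consequently $k$ vertex-disjoint copies of $H$ can be realised as $H$-models inside an $O(h\sqrt{k}) \times O(h\sqrt{k})$ grid. Combining these observations, there is a polynomial threshold $\theta(h, k)$ such that if $\tw(G) \geq \theta(h, k)$ then $\nu_H(G) \geq k$, and the packing side of the \EP{} conclusion already holds.

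It remains to handle the bounded-treewidth regime $\tw(G) < \theta(h, k)$, where the goal is to certify $\tau_H(G) \leq f(k)$ by exhibiting an explicit hitting set. The plan here is an iterative peeling argument: given a tree decomposition of width $O(\theta(h, k))$, I would locate an $H$-model $M$ whose bags form a short subtree, so that $V(M)$ is contained in a few adhesions of total size $\polylog(k) \cdot h^{O(1)}$; then remove a small separator hitting $M$ and recurse on each connected piece of the remainder. The amortised cost per $H$-model removed is $\polylog(k)$, so after $k$ iterations one obtains the bound $a h^b \cdot k \log^c(k+1)$ claimed in the statement.

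The main obstacle is the construction of the low-complexity $H$-model in the second step, since an arbitrary $H$-model living in a graph of bounded treewidth may still be spread over many bags. Overcoming this will require the machinery of \emph{well-linked sets} and \emph{shadow-free} models developed in the Chekuri--Chuzhoy papers: by solving a fractional relaxation (essentially a multicommodity-flow LP) and rounding carefully, one extracts a single $H$-model whose ``deletion cost'' is polylogarithmic in $k$. A secondary difficulty is keeping the dependence on $h$ polynomial rather than exponential, which must be handled by working with topological minors of subdivided walls inside the grid rather than with $H$ directly.
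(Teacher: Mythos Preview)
This theorem is not proved in the paper; it is quoted from~\cite{chekuri2013large}. The paper only sketches Chekuri and Chuzhoy's own derivation, in the paragraph following \autoref{thm:cctw}: one applies \autoref{thm:cctw} with $r=r(H)$ chosen (via the Grid Theorem) so that every graph of treewidth at least $r$ contains an $H$-minor; since $r$ is then a constant depending only on $H$, this shows directly that every $G$ with $\nu_H(G)<k$ has treewidth $O(k\log^{c'}(k+1))$, i.e.\ \emph{linear in $k$} up to polylogarithmic factors. A standard balanced-separator recursion on an optimal tree decomposition (Lemma~5.4 of~\cite{chekuri2013large}) then converts this into $\tau_H(G)=O(k\log^{c'+1}(k+1))$, at the price of one extra logarithm.

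Your first step differs in an essential way: you invoke the polynomial Grid Theorem on an $O(h\sqrt{k})\times O(h\sqrt{k})$ grid instead of going through \autoref{thm:cctw}. The polynomial Grid Theorem only produces an $r\times r$ grid minor once the treewidth exceeds $r^{C}$ for some absolute constant $C>2$, so the threshold you obtain is $\theta(h,k)=\Omega\bigl((h\sqrt{k})^{C}\bigr)=\Omega(k^{C/2})$. This is no longer linear in $k$, and feeding it into the usual tree-decomposition recursion would give $\tau_H(G)=O(k^{C/2}\log k)$, which is not of the required shape $ah^{b}\,k\log^{c}(k+1)$.

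You try to repair this in your second step, but that is where the genuine gap lies. The assertion that one can always locate an $H$-model destroyed by a separator of size $h^{O(1)}\polylog(k)$ is precisely the hard statement; your appeal to well-linked sets, ``shadow-free'' models, and LP rounding is a description of Chekuri and Chuzhoy's toolbox rather than an argument, and that toolbox is exactly what they use to establish \autoref{thm:cctw} in the first place. Your plan thus relocates the difficulty from \autoref{thm:cctw} into the peeling step without discharging it. (The accounting is also loose: removing a separator that hits one particular $H$-model does not in itself decrease $\nu_H$, so ``after $k$ iterations'' and the claimed amortisation both need further justification.)
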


If we consider $H$ to be fixed and focus solely on the dependence on $k$---which is the point of view we take in this paper---\autoref{thm:ccep} gives a $O(k \log^c k)$ bounding function. 
This is remarkably close to being best possible: 
If $H$ is planar with at least one cycle, then there is a $\Omega(k \log k)$ lower bound on bounding functions. 
This follows easily from the original lower bound of Erd\H{o}s and P\'osa for the case where $H$ is a triangle~\cite{Erdos1965independent}. 
Alternatively, one can see this by considering $n$-vertex graphs $G$ with treewidth $\Omega(n)$ and girth $\Omega(\log n)$ (which exist \cite{morgenstern1994existence}), and notice that $\tau_{H}(G) = \Omega(n)$ (because removing one vertex decreases treewidth by at most one, and $G-X$ has treewidth $O(1)$ when $G-X$ has no $H$-minor, by the Grid Theorem) while $\nu_{H}(G) = O(n / \log n)$ (because each $H$-model contains a cycle). 
Thus, a $O(k \log^c k)$ bound is optimal, up to the value of $c$. 
While no explicit value for $c$ is given in~\cite{chekuri2013large}, a quick glance at the proof suggests that it is at least a double-digit integer. 
In this paper, we put forward the conjecture that a $O(k \log^c k)$ bound holds with $c=1$.

\begin{conjecture}
\label{conj:main}
For every planar graph $H$, the \EP{} property holds for $H$-models with a $O(k \log k)$ bounding function.
\end{conjecture}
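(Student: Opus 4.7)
The plan is to follow the overall structure of the present paper's proof for wheels, namely a dichotomy based on the treewidth of $G$. Fix a planar graph $H$, and proceed by induction on $k$, aiming to show that if $\nu_H(G) < k$ then $\tau_H(G) \leq c k \log k$ for a suitable constant $c = c(H)$. Set a linear threshold $w(k) = c_1 k$ for a constant $c_1 = c_1(H)$ to be determined, and split into the two regimes $\tw(G) < w(k)$ and $\tw(G) \geq w(k)$.

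In the \emph{small-treewidth regime} $\tw(G) < w(k)$, take a tree decomposition of width less than $w(k)$ and locate a bag $B$ that forms a $1/2$-balanced separator for the family of $H$-models in $G$; such a bag exists by a standard argument. Removing $B$ splits $G$ into pieces, each containing at most $\nu_H(G)/2 < k/2$ pairwise disjoint $H$-models, so by induction each piece admits a hitting set of size at most $c(k/2)\log(k/2)$. The union of these hitting sets together with $B$ hits every $H$-model of $G$ and has total size at most $w(k) + c k \log(k/2)$; choosing $c$ large enough in terms of $c_1$, this is at most $c k \log k$, closing the induction.

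In the \emph{large-treewidth regime} $\tw(G) \geq c_1 k$, the goal is to produce $k$ pairwise disjoint $H$-models in $G$ directly. The natural route is to invoke the polynomial Grid Theorem of Chekuri and Chuzhoy to extract a wall of size polynomial in $\tw(G)$, then carve it into $k$ vertex-disjoint subwalls, each large enough to contain $H$ as a minor. The main obstacle will lie here: for the overall scheme to deliver the conjectured $O(k \log k)$ bound, the treewidth threshold must be truly linear in $k$, which in turn demands a wall of side length $\Omega(k)$. The currently known polynomial Grid Theorem only yields a wall of side length polynomial of degree strictly less than one in $\tw(G)$, so $\tw(G) = \Omega(k)$ is far from enough; one needs either a near-linear grid-minor theorem tailored to $H$, or an $H$-specific packing argument that bypasses the Grid Theorem. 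For $W_t$, the present paper exploits the hub-plus-cycle structure of wheels to achieve exactly such a linear threshold; designing an analogous tool for arbitrary planar $H$---perhaps via flat embeddings of $H$ into a wall combined with connectivity-based augmentations to realize $k$ disjoint models---appears to be the heart of the conjecture and the step where the bulk of the work will lie.
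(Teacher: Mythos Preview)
The statement is \autoref{conj:main}, an open conjecture; the paper proves only the special case $H=W_t$ (\autoref{thm:main}), so there is no proof of the full statement to compare against. Your proposal is in any case a plan rather than a proof, and the plan has a fatal structural flaw.

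Your scheme requires that $\tw(G) \geq c_1 k$ force $\nu_H(G) \geq k$ for some constant $c_1=c_1(H)$. This is false whenever $H$ contains a cycle. Take an $n$-vertex graph $G$ with treewidth $\Omega(n)$ and girth $\Omega(\log n)$; every $H$-model in $G$ contains a cycle of $G$ and hence uses $\Omega(\log n)$ vertices, so $\nu_H(G) = O(n/\log n)$. With $k \approx n/\log n$ one has $\tw(G) = \Theta(k \log k) \gg c_1 k$ while $\nu_H(G) < k$. Thus no linear threshold exists; the best possible is $\Theta(k\log k)$, and feeding a $\Theta(k\log k)$ threshold into your balanced-separator recursion produces only an $O(k\log^2 k)$ bound. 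The paper's introduction spells out precisely this obstruction when explaining why the Chekuri--Chuzhoy framework cannot reach $c=1$. The difficulty is not the polynomial loss in the Grid Theorem, as you suggest; the girth example shows the barrier is intrinsic to any argument that routes through a treewidth threshold.

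You also write that the paper's proof for wheels is ``a dichotomy based on the treewidth of $G$'' and that it ``achieves exactly such a linear threshold''. It does neither. The proof of \autoref{thm:main} never splits on treewidth. Instead it shows, via the polynomial kernel of \autoref{thm:fominkernel} (packaged as \autoref{lem:min-cex}), that a minimal counterexample $G$ has only polynomially many vertices in $k$, so that $\log|G| = O(\log k)$; it then argues structurally that $G$ must contain a $W_t$-model on $O(\log |G|)$ vertices, deletes it, and inducts. The concluding section singles out this ``find a logarithmic-size model inside a polynomial-size minimal counterexample'' strategy as the natural line of attack on \autoref{conj:main}; the treewidth dichotomy you propose is exactly the approach the introduction rules out.
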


If true, \autoref{conj:main} would completely settle the growth rate of the \EP{} functions for $H$-models for all planar graphs $H$ (up to the constant factor depending on $H$).  
That is,  if $H$ is planar with at least one cycle, then the $O(k \log k)$ bound would match the $\Omega(k \log k)$ lower bound mentioned above.  
And if $H$ is a forest, it is already known that the right order of magnitude is $O(k)$, see~\cite{FJW2013}. 

Going back to the $O(k \log^c k)$ bound of Chekuri and Chuzhoy~\cite{chekuri2013large}, one could initially hope that a value of $c=1$ could be obtained by optimizing the various steps of their proof.   
However, any constant $c$ obtained using their general approach necessarily satisfies $c\geq 2$. 
This is because they obtain \autoref{thm:ccep} as a corollary from the following result. 

\begin{theorem}[Chekuri and Chuzhoy~\cite{chekuri2013large}]
\label{thm:cctw} 
There exist integers $a',b',c' \geq 0$ such that for all integers $r, k \geq 1$, every graph $G$ of treewidth at least 
\[
a' r^{b'} \cdot k \log^{c'} (k+1)
\]
has $k$ vertex-disjoint subgraphs $G_1, \dots, G_k$, each of treewidth at least $r$. 
\end{theorem}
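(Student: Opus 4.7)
The plan is to induct on $k$. Given a graph $G$ of treewidth at least $T(r,k) := a' r^{b'} k \log^{c'}(k+1)$, the goal is to extract a single subgraph $H \subseteq G$ of treewidth at least $r$ together with a separator $S \subseteq V(G)$ of size at most $\Delta := T(r,k) - T(r,k-1) = O(r^{b'} \log^{c'}(k+1))$ between $V(H)$ and the remainder $V(G) \setminus (V(H) \cup S)$, so that the induced subgraph on that remainder still has treewidth at least $T(r,k-1)$. Applying the induction hypothesis to the remainder and adding $H$ yields the required $k$ vertex-disjoint subgraphs.

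To produce $H$ and $S$, I would work inside a tree decomposition $(\mathcal{T},(B_t)_{t\in V(\mathcal{T})})$ of $G$ of width close to $\tw(G)$ and descend recursively into one of its subtrees. At each level, pick a bag $B_t$ whose removal splits $\mathcal{T}$ into subtrees, each of which hosts a piece of treewidth at most $\tw(G)/2$, and recurse into the subtree that retains the largest share of treewidth; after $\Theta(\log(\tw(G)/\mathrm{poly}(r)))$ rounds, we reach a subtree whose bag-union $U$ satisfies $\tw(G[U]) \geq \mathrm{poly}(r)$ but whose interface to the rest of $G$ has shrunk to $O(\Delta)$ vertices (having been roughly halved at each level). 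An appeal to the polynomial Grid Theorem of Chekuri and Chuzhoy inside $G[U]$ then produces an $r\times r$ grid minor, hence a subgraph $H$ of treewidth at least $r$, while the accumulated interface vertices serve as the separator $S$.

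The main obstacle is achieving the near-linear dependence on $k$. A naive approach that extracts all $k$ disjoint subgraphs at once from a single large grid minor forces the initial treewidth to be at least $(r\sqrt{k})^{\Omega(1)}$, which is polynomial in $k$ rather than the claimed quasi-linear $k\log^{c'}k$. Avoiding this polynomial blow-up is the technical heart of the argument and, beyond the balanced descent sketched above, appears to require the authors' \emph{Path-of-Sets System} framework together with well-linked decomposition arguments to ensure that each descent step loses only a $(1/k)$-fraction of the available treewidth rather than a constant factor. The logarithmic factor $\log^{c'}(k+1)$ in the final bound would then correspond precisely to the number of levels of this recursive descent, while the factor $r^{b'}$ comes from the polynomial dependence in the Grid Theorem itself.
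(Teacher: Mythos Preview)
The paper does not prove this theorem; it is quoted as a result of Chekuri and Chuzhoy~\cite{chekuri2013large} and used only as background to motivate \autoref{conj:main}. There is therefore no proof in the paper to compare your proposal against.

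As for the proposal itself, it is more of a strategic outline than a proof, and you essentially acknowledge this in the final paragraph. The inductive scheme you describe---peel off one treewidth-$r$ subgraph behind a separator of size $O(r^{b'}\log^{c'}(k+1))$ and recurse---is a reasonable high-level shape, but the step where you ``descend in the tree decomposition, halving treewidth at each level'' does not work as stated: removing a single bag need not split the graph into pieces of roughly half the treewidth, and a naive balanced-separator recursion loses a constant fraction of the treewidth per level, which yields only a polynomial bound in $k$. You correctly identify that bridging this gap requires the well-linkedness and Path-of-Sets machinery from~\cite{chekuri2013large}, but that machinery is the entire content of the theorem, not an add-on; without it the argument is circular. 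In short, your proposal names the right ingredients but does not supply a proof, and since the present paper does not either, there is nothing further to compare.
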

Now, if we fix a planar graph $H$ and if $G$ is such that $\nu_{H}(G) < k$, then $G$ cannot have $k$ vertex-disjoint subgraphs each of treewidth at least $r$, where $r=r(H)$ is a constant such that every graph with treewidth at least $r$ contains an $H$ minor. Note that $r(H)$ exists by the Grid Theorem of Robertson and Seymour~\cite{robertson1986graph}.  
Thus, the above theorem implies that $G$ has treewidth $O(k \log^{c'} k)$. 
The authors of~\cite{chekuri2013large} then apply a standard divide-and-conquer approach on an optimal tree decomposition, and obtain a $O(k \log^{c} k)$ bound on $\tau_{H}(G)$ (see~\cite[Lemma~5.4]{chekuri2013large}). 
This unfortunately results in an extra $\log k$ factor;\ $c=c'+1$. 
On the other hand, we must have $c' \geq 1$ in \autoref{thm:cctw}, as shown again by $n$-vertex graphs with treewidth $\Omega(n)$ and girth $\Omega(\log n)$. 
Hence, $c\geq 2$. 
Therefore, one needs a different approach to prove \autoref{conj:main}. 

As a side remark, it is natural to conjecture that we could take $c'=1$ in \autoref{thm:cctw} (at least, if we forget about the precise dependence on $r$): 

\begin{conjecture}
\label{conj:tw}
There is a function $f:\N \to \N$ such that for all integers $r, k \geq 1$, every graph $G$ of treewidth at least 
\[
f(r) \cdot k \log (k+1)
\]
has $k$ vertex-disjoint subgraphs $G_1, \dots, G_k$, each of treewidth at least $r$. 
\end{conjecture}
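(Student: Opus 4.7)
The plan is to proceed by induction on $k$. For the base case $k=1$ it suffices to take $f(r) \geq r$ so that $G$ itself has treewidth at least $r$. For the inductive step, suppose $\tw(G) \geq f(r)\cdot k\log(k+1)$ and assume the statement holds for $k-1$. The strategy is to isolate a single subgraph $H \subseteq G$ of treewidth at least $r$ and then apply the induction hypothesis to $G - V(H)$. Since $\tw(G-S) \geq \tw(G) - |S|$ for every $S \subseteq V(G)$, the induction closes provided that
\[
|V(H)| \;\leq\; f(r)\bigl[k\log(k+1)-(k-1)\log k\bigr] \;=\; O\bigl(f(r)\log k\bigr),
\]
so that $\tw(G - V(H)) \geq f(r)(k-1)\log k$.

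The whole difficulty is thus concentrated in the following sub-lemma: \emph{every graph of treewidth at least $f(r)\cdot k\log(k+1)$ contains a subgraph of treewidth at least $r$ on $O(f(r)\log k)$ vertices}. This target is of the right order of magnitude, since the lower-bound construction recalled before \autoref{conj:main} (graphs of treewidth $\Omega(n)$ and girth $\Omega(\log n)$) forces every subgraph of treewidth at least $2$ to span $\Omega(\log n) = \Omega(\log k)$ vertices. A natural route to the sub-lemma is through the polynomial grid theorem of Chekuri and Chuzhoy~\cite{chekuri2016polynomial}: produce an $r\times r$ grid minor of $G$ and argue that its branch sets can be chosen ``shallow'', for instance by starting from a bramble or tangle of large order and iteratively compressing its support via a BFS-layering argument, until the total number of vertices involved is $O(r^2\log k)$.

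The principal obstacle is precisely this control on the total vertex count of a witness to treewidth at least $r$. Known proofs of the polynomial grid theorem proceed through repeated well-linked decompositions that provide no direct bound on branch-set sizes, and in the extremal high-girth examples every branch set closing a cycle must already span $\Omega(\log n)$ vertices. One is therefore asking for a near-optimal compression of a bramble of order $r+1$ into a region of logarithmic radius, which appears to require a genuinely new structural result about small-support witnesses to treewidth in dense high-girth graphs. We expect the main difficulty of \autoref{conj:tw} to lie exactly here; once such a compression lemma is in place, the induction outlined above delivers the conjectured $f(r)\cdot k\log(k+1)$ bound without further loss.
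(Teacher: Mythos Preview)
The statement you address is \autoref{conj:tw}, which is stated in the paper as a \emph{conjecture}; the paper does not prove it. The only argument the paper offers is the short observation, immediately following the statement, that \autoref{conj:tw} is implied by \autoref{conj:main}: taking $H$ to be the $r\times r$ grid, either $\nu_H(G)\geq k$ and the $k$ disjoint models give the desired subgraphs, or the conjectured $O(k\log k)$ transversal together with the Grid Theorem forces $\tw(G)$ to be $O_r(k\log k)$. So the paper's route is a reduction to the (open) Erd\H{o}s--P\'osa conjecture for planar minors, not a direct proof.

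Your proposal takes a genuinely different route --- induction on $k$ --- and the induction itself is correctly set up: if one can find $H\subseteq G$ with $\tw(H)\geq r$ and $|V(H)|\leq f(r)\bigl[k\log(k+1)-(k-1)\log k\bigr]$, then removing $V(H)$ and applying the hypothesis to $G-V(H)$ closes the argument. However, you also correctly isolate the entire difficulty in the sub-lemma asserting the existence of such a logarithmically small high-treewidth witness, and you explicitly leave that sub-lemma open. Your proposal is therefore not a proof but a reformulation: the missing sub-lemma is at least as strong as the conjecture itself (iterating it $k$ times \emph{is} the conjecture), and neither the polynomial Grid Theorem nor \autoref{thm:small-minors} supplies it --- the former gives no control on the size of the model, and the latter requires high average degree rather than high treewidth. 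The paper's proof of the wheel case (\autoref{thm:main}) does use \autoref{thm:small-minors} to produce a logarithmic-size model, but only after first invoking the minimal-counterexample and kernelization machinery of \autoref{lem:min-cex} and \autoref{thm:fominkernel} to bound $|G|$ polynomially in $k$; there is no analogue of that step available in a direct inductive attack on \autoref{conj:tw}, which is precisely why the paper leaves it as a conjecture.
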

As it turns out, this conjecture is implied by our \autoref{conj:main}: 
It suffices to take $H$ to be the $r \times r$-grid, which has treewidth $r$. 
Then either $\nu_{H}(G) \geq k$, in which case we are done, or $\nu_{H}(G) < k$, and then there is a subset $X$ of $O(k \log k)$ vertices such that $G-X$ has no $H$-minor, and hence $G-X$ has treewidth at most $g(r)$ for some function $g$ by the Grid Theorem. 
Adding $X$ to all bags of an optimal tree decomposition of $G-X$, we deduce that $G$ has treewidth $O(k \log k)$. 
Thus, this is another motivation to study \autoref{conj:main}. 

While \autoref{conj:main} remains open in general, it is known to hold for some specific graphs $H$.
For example, the original \EP{} theorem~\cite{Erdos1965independent} is simply the assertion that \autoref{conj:main} holds when $H$ is a triangle.  This was recently extended to the case where $H$ is an arbitrary cycle~\cite{FH14} (see also~\cite{BBR07, MNSW17} for related results). The conjecture also holds when $H$ is a multigraph consisting of two vertices linked by a number of parallel edges~\cite{Chatzidimitriou2017}.

Our main result is that \autoref{conj:main} holds when $H$ is a wheel.  A \emph{wheel} is a graph obtained from a cycle by adding a new vertex adjacent to all vertices of the cycle. We denote by $W_t$ the wheel on $t+1$ vertices.

\begin{theorem}
\label{thm:main}
For each integer $t \geq 3$, the \EP{} property holds for $W_t$-models with a $O(k \log k)$ bounding function. 
\end{theorem}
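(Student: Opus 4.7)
The plan is to prove the theorem by induction on $k$. For the inductive step, let $G$ be a graph with $\nu_{W_t}(G) < k$; the goal is to show $\tau_{W_t}(G) \leq c k \log k$ for a suitable constant $c = c(t)$. The strategy is to exhibit a single $W_t$-model $M$ in $G$ on at most $\alpha \log k$ vertices (for some constant $\alpha = \alpha(t)$), delete $V(M)$, and apply the inductive hypothesis to $G - V(M)$. Because $M$ itself is a $W_t$-model vertex-disjoint from every $W_t$-model surviving in $G - V(M)$, we have $\nu_{W_t}(G - V(M)) \leq \nu_{W_t}(G) - 1 < k - 1$, so induction supplies a hitting set $X'$ for $W_t$-models in $G - V(M)$ of size at most $c(k-1)\log(k-1)$. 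Then $X := X' \cup V(M)$ hits every $W_t$-model in $G$, and $|X| \leq c(k-1)\log(k-1) + \alpha \log k \leq c k \log k$ as soon as $c \geq \alpha$.

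The heart of the argument is therefore a \emph{small-model lemma}: every graph $G$ with a $W_t$-minor and $\nu_{W_t}(G) < k$ contains a $W_t$-model on at most $\alpha \log k$ vertices. I would prove it in three steps. \emph{Preprocessing.} Iteratively remove vertices not participating in any $W_t$-model and suppress degree-$2$ chains, reducing to a core subgraph $G'$ of minimum degree at least $3$ (and, at the ``hub candidates'', eventually at least $t$) in which $\nu_{W_t}$ is unchanged. \emph{Localization.} Apply \autoref{thm:cctw} to bound $\tw(G') = O(k \polylog k)$, and use the tree decomposition to argue that some $W_t$-model of $G'$ already lives in a subgraph $\hat G \subseteq G'$ of size $\mathrm{poly}(k)$; essentially, a $W_t$-minor must be concentrated in a bounded-diameter connected subtree of the decomposition. \emph{Short cycle via expansion.} Pick a vertex $v \in V(\hat G)$ of degree at least $t$ and grow BFS trees in $\hat G - v$ from the neighbors of $v$; since the minimum degree is bounded below, these trees expand geometrically, so within $O(\log |\hat G|) = O(\log k)$ levels two of them meet, producing a short cycle through two neighbors of $v$. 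A Menger-type gluing extends this to a cycle of length $O_t(\log k)$ in $\hat G - v$ passing through at least $t$ neighbors of $v$; together with $v$ this is the required small $W_t$-model.

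The main obstacle is the localization step. Bounded treewidth by itself does not bound $|V(G')|$, and without localization the BFS in the third step only yields length $O(\log n)$, which is too weak once $n \gg k$. Controlling the interaction between the tree decomposition of $G'$ and its $W_t$-minors, so as to cut $G'$ down to $\mathrm{poly}(k)$ vertices while preserving a $W_t$-minor, is the wheel-specific difficulty that must be overcome. Preprocessing and the BFS expansion closely parallel the classical Erd\H{o}s-P\'osa argument for cycles; the Menger-type stitching ensuring that the short cycle meets $t$ distinct neighbors of the hub candidate $v$ is an additional wheel-specific ingredient, but once the core is sufficiently well connected (as enforced in the preprocessing) it should follow from standard connectivity arguments inside $\hat G$.
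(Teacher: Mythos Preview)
Your inductive skeleton---find an $O(\log k)$-vertex $W_t$-model, delete it, recurse---is exactly the paper's strategy (phrased there as a minimal-counterexample argument; see \autoref{cl:small-model}). The gap is precisely where you flag it: localization. You propose to use \autoref{thm:cctw} to bound $\tw(G')$ and then claim a $W_t$-minor must live in a bounded-diameter portion of the tree decomposition; but this is false in general---a minimal $W_t$-model can thread an arbitrarily long path in the decomposition tree (its rim can be an arbitrarily long cycle even in a graph of treewidth~$3$). Bounded treewidth alone yields no $\mathrm{poly}(k)$ bound on the support of a model. The paper resolves this with a completely different tool: the kernel theorem of Fomin--Lokshtanov--Misra--Saurabh (\autoref{thm:fominkernel}), which says that a minor-minimal graph with $\tau_H(G)=m$ has at most $\pi(m)$ vertices for some polynomial $\pi$. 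Since a minimal counterexample is minor-minimal for its value of $\tau_{W_t}$, this gives $|G|\le\mathrm{poly}(k)$ outright (\autoref{lem:min-cex}), and then $O(\log n)=O(\log k)$ for free.

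There is a second issue: suppressing degree-$2$ chains does not preserve model size. An $O(\log k)$-vertex model in the suppressed graph may unpack to a model with arbitrarily long rim segments in $G$, so even a successful expansion argument in $G'$ would not deliver a small model in $G$. And even granting localization, your BFS/Menger step is underspecified: minimum degree $3$ gives a short cycle, but there is no clear mechanism forcing a short cycle in $\hat G - v$ to visit $t$ prescribed neighbours of $v$. The paper's route to a small model is entirely different: it partitions $G$ into ``pieces'' (short cycles, short paths, leftover components), builds auxiliary graphs $\Gsmall \subseteq \Gbig$ on the central pieces, and either finds a small $K_{t+1}$-minor in $\Gsmall$ via \autoref{thm:small-minors} (which transfers back to a small $W_t$-model in $G$), or exploits low average degree in $\Gsmall$ and $\Gbig$ together with \autoref{protrusion} to isolate a piece with small boundary and extract a small model there.
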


We remark that our main theorem implies all the aforementioned special cases. This is because the existence of a $O(k \log k)$ bounding function for $H$-models is preserved under taking minors of $H$ (see \autoref{lemma:monotone}). Our result also have the following consequence.

\begin{corollary}\label{cor:twbound}
For every $t\in \N$ there is a function $g\colon \N \to \N$ with $g(k) = O(k \log k)$ such that every $(k\cdot W_t)$-minor free graph has treewidth at most $g(k)$.
\end{corollary}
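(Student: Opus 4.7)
The corollary is a direct consequence of \autoref{thm:main} combined with the standard fact that excluding any fixed planar minor bounds treewidth. The plan is the three-step reduction that the authors themselves outline after \autoref{conj:tw} in order to derive treewidth bounds from bounding functions, specialised to $H = W_t$ (where the relevant instance of \autoref{conj:main} is now an actual theorem).

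First, note that a graph $G$ being $(k \cdot W_t)$-minor free is equivalent to $\nu_{W_t}(G) < k$. Applying \autoref{thm:main} gives a constant $c=c(t)$ and a set $X \subseteq V(G)$ with $|X| \leq c \cdot (k-1) \log (k-1) = O(k \log k)$ such that $G - X$ has no $W_t$-model (equivalently, no $W_t$-minor).

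Second, since the wheel $W_t$ is planar, the Grid Theorem of Robertson and Seymour supplies a constant $r_t$ depending only on $t$ such that every graph of treewidth at least $r_t$ contains $W_t$ as a minor; hence $\tw(G - X) < r_t$. Finally, given an optimal tree decomposition of $G - X$, adding the set $X$ to every bag yields a tree decomposition of $G$ of width at most $|X| + (r_t - 1) = O(k \log k)$. Taking $g(k)$ to be this bound (which depends on $t$, but $t$ is fixed) completes the proof.

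There is no real obstacle here: all the work is done in \autoref{thm:main}, and the remainder is the well-known ``adding a small hitting set to every bag'' manoeuvre combined with the Grid Theorem. I would simply present the argument in three short lines, mirroring the corresponding paragraph the authors use to pass from \autoref{conj:main} to \autoref{conj:tw}.
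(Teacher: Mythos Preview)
Your argument is correct and is exactly the paper's approach: the authors derive the corollary by combining \autoref{thm:main} with \autoref{ceiling}, and your three steps (observe $\nu_{W_t}(G)\le k-1$, delete a small transversal, then add it back to every bag of a bounded-width decomposition of the $W_t$-minor-free remainder) are precisely the content of the proof of \autoref{ceiling} specialised to $H=W_t$. The only cosmetic slip is writing the bound as $c(k-1)\log(k-1)$, which is undefined at $k=1$; use $f(k-1)$ with the paper's $f(k)=\gamma k\log(k+1)$ instead.
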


The rest of the paper is organized as follows.  In the next section we present some general lemmas about $H$-models.  Since these lemmas are valid for arbitrary planar graphs $H$, they may be useful in attacking Conjectures~\ref{conj:main} and~\ref{conj:tw}.  In \autoref{sec:proof}, we specialize to the case of wheels and prove our main theorem.  We conclude with some open problems in \autoref{sec:conclusion}.

\section{General Tools} \label{sec:generaltools}

In this paper, our graphs are simple (no parallel edges nor loops).  Let $H,G$ be two graphs.  We let $|G|$ denote $|V(G)|$.  
%For a positive integer $k$, we let $k\cdot G$ be the union of $k$ vertex-disjoint copies of~$G$.  
%  
We assume the reader is familiar with the notions of graph minors, tree decompositions, and treewidth (see Diestel~\cite{Diestel} for an introduction to the area). 
We let $\tw(G)$ denote the treewidth of $G$. 

An \emph{$H$-transversal} of $G$ is a set $X$ of vertices of $G$ such that $G - X$ has no $H$-model.  A graph is \emph{minor-minimal} for a given property if it satisfies the property and none of its proper minors does.

We use the following results. The first is an extension of a classic result of Kostochka~\cite{kostochka1984lower} and Thomason~\cite{Thomason1984}, where in addition the size of the $K_t$-model is logarithmic.  
(For definiteness, all logarithms are in base $2$ in this paper.)  

\begin{theorem}[\cite{Montgomery2015loga}, see also \cite{Fiorini20121226, Shapira2015}]\label{thm:small-minors}
There is a function $\varphi(t) = O(t \sqrt{\log t})$ such that, if an $n$-vertex graph has average degree at least $\varphi(t)$, then it contains a $K_t$-model on $O(\log n)$ vertices. 
\end{theorem}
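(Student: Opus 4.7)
The plan is to combine the classical Kostochka--Thomason minimum-degree bound for $K_t$-minors with a BFS-based construction inside a sublinear expander subgraph, arranged so that the branch sets stay logarithmic in size.

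First, I would reduce to a subgraph $G' \subseteq G$ of minimum degree at least $\varphi(t)/4$ by iteratively deleting vertices of degree below $\varphi(t)/4$; the resulting $G'$ has at most $n$ vertices and minimum degree $\delta = \Theta(t\sqrt{\log t})$. I would then pass to a Koml\'os--Szemer\'edi sublinear expander subgraph of $G'$ with comparable minimum degree, inside which BFS balls grow super-polynomially and remain robust under deleting a sublinear number of vertices. This is the workhorse property that will make short paths available throughout the construction.

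Next, I would build the $K_t$-model by picking $t$ seeds $v_1, \dots, v_t$ in the expander and constructing branch sets $B_1, \dots, B_t$ as small trees: each $B_i$ is rooted at $v_i$ and has $t-1$ ``arms,'' where the arm of $B_i$ pointing toward $v_j$ terminates at a vertex adjacent to the corresponding arm of $B_j$. Each arm is obtained by a BFS of depth $O(\log n)$: in a sublinear expander with minimum degree $\Omega(t\sqrt{\log t})$, any two vertices are joined by a path of length $O(\log n)$, even after removing a sublinear set of previously used vertices. Iterating over all $\binom{t}{2}$ pairs, the total size of the model comes out to $O(t^2 \log n)$, i.e., $O(\log n)$ for fixed $t$, as required.

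The main obstacle is keeping the BFS argument alive as arms accumulate: the ``forbidden'' set of already-used vertices grows with the number of arms built so far, and a naive attempt to simply delete this set from $G'$ would destroy the minimum-degree hypothesis, since $\delta = \Theta(t\sqrt{\log t})$ can be much smaller than the accumulated usage $O(t^2 \log n)$. The sublinear expander framework is precisely what saves the argument, because expansion is robust under deleting any $o(n)$ vertices; combined with a careful greedy/inductive choice of the seeds and of the BFS endpoints, this guarantees that every new arm can be built of length $O(\log n)$ inside what remains, and the construction terminates with a $K_t$-model on $O_t(\log n)$ vertices.
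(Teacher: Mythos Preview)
The paper does not give its own proof of this theorem; it is quoted as an external result from Montgomery~\cite{Montgomery2015loga} (with related references~\cite{Fiorini20121226, Shapira2015}) and used as a black box in the proof of \autoref{thm:main}. So there is no in-paper argument to compare against.

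That said, your sketch is broadly aligned with the approach in the cited sources: pass to a subgraph of large minimum degree, extract a Koml\'os--Szemer\'edi sublinear expander, and use short-path/BFS arguments inside it to assemble logarithmic-size branch sets. You also correctly identify the central difficulty, namely that the set of already-used vertices can exceed the minimum degree $\Theta(t\sqrt{\log t})$, so one cannot simply delete it and rerun a naive BFS. Your resolution, however, is too quick. The slogan ``expansion is robust under deleting any $o(n)$ vertices'' is not by itself enough: in a sublinear expander, a BFS ball from a single vertex first has to grow past the forbidden set before the expansion guarantee kicks in, and with $\delta=\Theta(t\sqrt{\log t})$ versus a forbidden set of size $\Theta(t^2\log n)$ this initial phase is exactly where the naive argument breaks. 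The actual proofs (Montgomery in particular) handle this with substantially more care---e.g.\ growing several balls in parallel, carefully interleaving the construction of arms, and exploiting the precise form of sublinear expansion---rather than a straightforward one-pair-at-a-time BFS. Your outline would need that extra layer to be a genuine proof.
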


The second is a theorem of Fomin, Lokshtanov, Misra and Saurabh \cite{FominLMS12}, whose original purpose was to show that the algorithmic problem of finding a minimum-size $H$-transversal admits a polynomial-size kernel when $H$ is planar. 

\begin{theorem}[\cite{FominLMS12}]
\label{thm:fominkernel}
For every planar graph $H$, there is a polynomial $\pi$ such that for every $k \in \N$, every graph $G$ with $\tau_H(G) = k$ and minor-minimal with this property satisfies $|G| \leq \pi(k)$.
\end{theorem}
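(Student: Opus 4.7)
The plan is to use the \emph{protrusion replacement} technique, which has become a standard tool for kernelizing planar-minor-hitting problems. Two ingredients drive the argument: a linear bound on $\tw(G)$ in terms of $k$, and a finiteness result for the behavior of bounded-boundary subgraphs under $H$-transversals.

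First I would bound $\tw(G)$ by $O(k)$ whenever $\tau_H(G) = k$. Since $H$ is planar on $h$ vertices, it is a minor of a sufficiently large grid, so by the Robertson--Seymour Grid Theorem there is a constant $c_H$ such that every $H$-minor-free graph has treewidth at most $c_H$. If $X$ is an optimal $H$-transversal of $G$, then $G - X$ is $H$-minor-free, hence of treewidth at most $c_H$; adding $X$ to every bag of an optimal tree decomposition of $G - X$ gives $\tw(G) \le c_H + k$. A standard pigeonhole argument on this tree decomposition then shows that, when $|V(G)|$ is sufficiently large relative to $k$, $G$ contains a large \emph{protrusion}: an induced subgraph $P$ with $|\partial P| = O(k)$ and $\tw(P) = O(k)$.

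Next, I would invoke the \emph{protrusion replacement lemma}: for each fixed boundary size $b$, there are only finitely many equivalence classes of $b$-boundaried graphs under the relation ``having the same effect on $\tau_H$ when plugged into any host graph'', and each class contains a minor-minimal representative whose size is bounded by a function of $H$ and $b$. Applying this to the large protrusion $P$ produces an equivalent protrusion $P^*$ that is a proper minor of $P$; gluing $P^*$ back into $G$ in place of $P$ yields a proper minor $G'$ of $G$ with $\tau_H(G') = \tau_H(G) = k$, contradicting the minor-minimality of $G$. Hence $|V(G)| \le \pi(k)$ for some polynomial $\pi$ depending on $H$.

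The main obstacle is the replacement lemma, and particularly the requirement that the canonical representative be an actual \emph{minor} of the original protrusion rather than merely equivalent to it. A standard route is via Courcelle's theorem, which expresses the ``$H$-transversal type'' of a $b$-boundaried graph in monadic second-order logic and thus gives finiteness of the equivalence relation; combined with the fact that every antichain in the minor order is finite, this yields a minor-minimal representative inside each equivalence class. Bounding the size of this representative polynomially in $b$---and hence polynomially in $k$ since $b = O(k)$---rather than merely by a constant is the delicate step, and is where the planarity of $H$ is ultimately exploited through bidimensionality-style arguments.
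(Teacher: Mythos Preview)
The paper does not prove this theorem; it is quoted from \cite{FominLMS12} and used as a black box. So there is no ``paper's own proof'' to compare against, and your proposal is effectively a sketch of the argument from \cite{FominLMS12}.

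Your outline does follow the protrusion-replacement strategy of \cite{FominLMS12}, but there is a genuine error in the parameters that would make the argument fail as written. You claim the protrusion $P$ has $|\partial P| = O(k)$ and $\tw(P) = O(k)$, and then worry about bounding the representative ``polynomially in $b$'' where $b = O(k)$. This is backwards. The whole point of a protrusion is that both its boundary size and its treewidth are bounded by a \emph{constant} depending only on $H$, not on $k$. With $b$ constant, the number of equivalence classes of $b$-boundaried graphs (for the $\tau_H$-equivalence) is finite and depends only on $H$, so each representative has size $O_H(1)$; the polynomial dependence on $k$ comes from the protrusion \emph{decomposition}, which splits $G$ into $O(k)$ (or $\operatorname{poly}(k)$) many such pieces. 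If the boundary were genuinely $\Theta(k)$, the number of equivalence classes would blow up (already the number of ways the boundary can attach is unbounded in $k$), and no polynomial kernel would follow.

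Concretely: from $\tw(G) \le c_H + k$ one extracts a protrusion decomposition into $O(k)$ parts, each with boundary and treewidth at most some constant $c'_H$; any part that is too large (larger than the maximum representative size for boundary $c'_H$) can be replaced by a strictly smaller minor-equivalent piece, contradicting minor-minimality of $G$. That is where the polynomial bound $\pi(k)$ actually comes from.
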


\subsection{Minimal counterexamples to the \texorpdfstring{\EP{}}{Erdös-Pósa} property}

Let $H$ be a graph and let $f \colon \N \to \mathbb{R}$ be a function. We say that a graph $G$ is a \emph{minimal counterexample} to the \EP{} property for $H$-models with bounding function $f$ if the following properties hold:
\begin{enumerate}[(i)]
    \item\label{it:notep} $\tau_H(G) > f(\nu_H(G))$;
    \item\label{it:notep2} subject to the above constraint, $\nu_H(G)$ is minimum;
    \item subject to the above constraints, $|G|$ is minimum;
    \item subject to the above constraints, $|E(G)|$ is minimum.
\end{enumerate}

Notice that the two last requirements of the above definition imply that a minimal counterexample is a minor-minimal graph satisfying requirements \eqref{it:notep} and \eqref{it:notep2}.
The following lemma gives a bound on the size of minimal counterexamples.

\begin{lemma}
\label{lem:min-cex}
Let $H$ be a planar graph and let $f\colon \N \to \R$ be a polynomial non-decreasing function. Then there is a polynomial $\rho$ such that, for every minimal counterexample $G$ to the \EP{} property for $H$-models with bounding function $f$, we have $|G| \leq \rho(\nu_{H}(G))$.
\end{lemma}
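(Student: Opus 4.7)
The plan is to apply Theorem~\ref{thm:fominkernel}, which gives $|G| \leq \pi(\tau_H(G))$ for some polynomial $\pi$ whenever $G$ is minor-minimal with its given value of $\tau_H$. The proof then reduces to two tasks: (a) verify that a minimal counterexample $G$ is minor-minimal with respect to $\tau_H$; and (b) bound $\tau_H(G)$ by a polynomial in $\nu_H(G)$.

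For (a), I would let $G'$ be a proper minor of $G$ and argue that $\tau_H(G') < \tau_H(G)$. Since an $H$-model in a minor lifts to an $H$-model in the parent graph, $\nu_H$ is monotone under minors, so $\nu_H(G') \leq \nu_H(G)$. If the inequality is strict, then condition~(ii) in the definition of a minimal counterexample forces $G'$ to satisfy $\tau_H(G') \leq f(\nu_H(G'))$; if it is an equality, then conditions (iii) and (iv) (minimality of $|G|$, then of $|E(G)|$) yield the same conclusion. In either case, using that $f$ is non-decreasing together with the counterexample hypothesis $f(\nu_H(G)) < \tau_H(G)$, I obtain $\tau_H(G') \leq f(\nu_H(G)) < \tau_H(G)$, which is precisely minor-minimality with respect to $\tau_H$.

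For (b), I apply the argument of (a) to $G' = G - v$ for any $v \in V(G)$; the empty-graph case is trivial since then $|G| = 0$ already satisfies any bound. The above reasoning actually yields the stronger conclusion $\tau_H(G - v) \leq f(\nu_H(G))$, and since $\{v\}$ adjoined to any $H$-transversal of $G - v$ is itself an $H$-transversal of $G$, I deduce $\tau_H(G) \leq f(\nu_H(G)) + 1$.

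Combining (a) and (b) via Theorem~\ref{thm:fominkernel} gives $|G| \leq \pi(\tau_H(G)) \leq \pi(f(\nu_H(G)) + 1)$, which is a polynomial in $\nu_H(G)$ because both $\pi$ and $f$ are polynomials; setting $\rho(k) := \pi(f(k) + 1)$ finishes the proof. I do not anticipate any serious obstacle: the core observation is the interplay between the four minimality clauses and the FLMS kernel bound, and the only mildly delicate point is handling the case $\nu_H(G') = \nu_H(G)$ in the minor-minimality step, which is precisely what the tiebreakers on $|G|$ and $|E(G)|$ are designed to settle.
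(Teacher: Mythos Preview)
Your proposal is correct and follows essentially the same approach as the paper: both arguments show that a minimal counterexample $G$ is minor-minimal with its value of $\tau_H$, bound $\tau_H(G)$ by $f(\nu_H(G))+1$ via the one-vertex-deletion trick, and then invoke Theorem~\ref{thm:fominkernel}. The only cosmetic difference is that the paper pins down $\tau_H(G) = \lfloor f(k)\rfloor + 1$ exactly before applying $\pi$, whereas you feed in the upper bound and implicitly assume $\pi$ is non-decreasing; this is harmless since $\pi$ can always be replaced by a non-decreasing polynomial majorant.
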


\begin{proof}
Let $k := \nu_H(G)$.
Let us first show that $\tau_H(G) = \lfloor f(k)\rfloor +1 $. 
Let $v\in V(G)$. 
Observe that $\tau_H(G) \leq \tau_H(G - v) + 1$.
By minimality of $G$, $\tau_H(G - v) \leq f(\nu_H(G - v))$. 
As $G-v$ is a minor of $G$, we also have $\nu_H(G-v) \leq \nu_H(G)$.
We deduce $\tau_H(G) \leq f(k)+1$. 
Since $G$ is a counterexample, we also have $\tau_H(G) > f(k)$.  
It follows that $\tau_H(G) = \lfloor f(k) \rfloor +1$. 

Now,  $\nu_H(G') \leq \nu_H(G)$ holds for every proper minor $G'$ of $G$, and thus $\tau_H(G') < \tau_H(G)$ (otherwise $G$ would not be a minimal counterexample). 
Hence $G$ is minor-minimal with the property that  $\tau_H(G) = \lfloor f(k) \rfloor +1$.
By \autoref{thm:fominkernel} we obtain $|G| \leq \pi(\lfloor f(k) \rfloor +1)$ where $\pi$ is the polynomial given by that theorem. 
Therefore, it suffices to take $\rho: t \mapsto \pi(\lfloor f(t) \rfloor +1)$. 
\end{proof}

Informally, the following result, originally proved in \cite{FJW2013}, states that if a graph $G$ has a large $H$-minor-free induced subgraph with a small `boundary', then there is a smaller graph $G'$ where the values of $\nu_H$ and $\tau_H$ are the same.

\begin{theorem}[\cite{FJW2013}]\label{fjw13}
For every planar graph $H$, there is a computable function $g'\colon \N\to \N$ such that, for every graph $G$, if $J$ is an $H$-minor-free induced subgraph of $G$ such that exactly $p$ vertices of $J$ have a neighbor in $V(G) \setminus V(J)$ and $|J| \geq g'(p)$, then there exists a graph $G'$ such that $\tau_H(G') = \tau_H(G)$, $\nu_H(G') = \nu_H(G)$, and $|G'| < |G|$.
\end{theorem}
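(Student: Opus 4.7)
The plan is to apply a protrusion-replacement argument in the spirit of Bodlaender, Fomin, Lokshtanov, Penninkx, Saurabh, and Thilikos. First, since $J$ is $H$-minor-free and $H$ is planar, the Grid Theorem of Robertson and Seymour yields a constant $w = w(H)$ such that $\tw(J) \leq w$. Combined with the hypothesis that only $p$ vertices of $J$ have neighbors outside $V(J)$, the subgraph $J$ is a $(p,w)$-protrusion of $G$: a piece of bounded treewidth attached to the rest of $G$ through a boundary of size $p$.

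Second, I would equip the class of $p$-boundaried graphs that are $H$-minor-free and have treewidth at most $w$ with an equivalence relation capturing the two invariants of interest. Declare $J_1 \sim J_2$ if, for every $p$-boundaried graph $G^\circ$, the graphs $G^\circ \oplus J_1$ and $G^\circ \oplus J_2$ obtained by gluing along the boundary satisfy $\tau_H(G^\circ \oplus J_1) = \tau_H(G^\circ \oplus J_2)$ and $\nu_H(G^\circ \oplus J_1) = \nu_H(G^\circ \oplus J_2)$. A Myhill-Nerode style argument on nice tree decompositions shows that $\sim$ has finite index: at each bag one needs to remember only (i) which boundary vertices lie in a prospective $H$-transversal and the minimum number of additional internal vertices required, and (ii) the set of partial $H$-models cut by the bag, recorded up to their interface with the boundary, together with the maximum number of already-completed internal models. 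Since there are only finitely many such states as a function of $p$ and $w$ (and hence of $p$ and $H$), equivalence classes are finitely many. Choose a representative of minimum size in each class, and define $g'(p)$ to be one plus the maximum size of such a representative.

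Third, if $|J| \geq g'(p)$, then $J$ is strictly larger than the chosen representative $J^*$ of its class. Replacing $J$ by $J^*$ in $G$, identifying the $p$ boundary vertices according to the boundary bijection, yields a graph $G'$ with $|G'| < |G|$, and by definition of $\sim$ we have $\tau_H(G') = \tau_H(G)$ and $\nu_H(G') = \nu_H(G)$, as desired.

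The main obstacle is the joint finite-state description of the pair $(\tau_H, \nu_H)$. For $\tau_H$ alone, the invariant is definable in monadic second-order logic on graphs of bounded treewidth, so Courcelle's theorem gives finite-state tree automata immediately. For $\nu_H$, one is optimizing not a vertex set but a family of vertex-disjoint connected subgraphs witnessing $H$-minors, and the state description must be carefully chosen to remain bounded: internal models that are fully contained in $J$ can be summarized by a counter, whereas models straddling the boundary must be recorded up to their trace on the $p$ boundary vertices. Once both invariants are captured by a single finite-index product automaton on $p$-boundaried treewidth-$w$ graphs, the replacement step goes through and yields the required computable function $g'$.
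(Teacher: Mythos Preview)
The paper does not give its own proof of this theorem: it is quoted verbatim from \cite{FJW2013} and used as a black box (to derive \autoref{protrusion}). So there is no in-paper argument to compare against.

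That said, your sketch is the right one and is essentially the argument of \cite{FJW2013}, which in turn adapts the protrusion-replacement machinery of Bodlaender--Fomin--Lokshtanov--Penninkx--Saurabh--Thilikos. One simplification you may have missed: since $J$ is assumed $H$-minor-free, there are \emph{no} $H$-models entirely inside $J$, so the ``counter for completed internal models'' in your state description is always~$0$ and can be dropped. Likewise, every $H$-model in $G^\circ \oplus J$ that meets $J$ must cross the boundary, so in any packing at most $p$ models touch $J$; this is what keeps the $\nu_H$-relevant state bounded. With these observations the finite-index claim for the product relation (covering both $\tau_H$ and $\nu_H$) goes through cleanly, and the replacement step is exactly as you describe. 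The main work, as you note, is carefully encoding partial $H$-models by their trace on the $p$ boundary vertices; this is done explicitly in \cite{FJW2013}.
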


We can use \autoref{fjw13} to upper bound the size of $H$-minor-free induced subgraphs in minimal counterexamples as follows.

\begin{corollary}\label{protrusion}
For every planar graph $H$, there is a computable and non-decreasing function $g\colon \N \to \N$ such that, if $G$ is a minimal counterexample to the \EP{} property for $H$-models with bounding function $f$ for some function $f\colon \N\to\R$, then every $H$-minor free induced subgraph $J$ of $G$ that has exactly $p$ vertices with a neighbor in $V(G) \setminus V(J)$ satisfies $|J|<g(p)$.
\end{corollary}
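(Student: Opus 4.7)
The plan is to derive the statement as an essentially direct consequence of \autoref{fjw13} combined with the minimality conditions (ii) and (iii) in the definition of a minimal counterexample. Let $g'$ be the computable function provided by \autoref{fjw13}. Since $g'$ is not assumed to be monotone, I would first set $g(p) := 1 + \max\{g'(q) : 0 \leq q \leq p\}$, which is computable and non-decreasing by construction, and satisfies $g(p) > g'(p)$.

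Now let $G$ be a minimal counterexample to the \EP{} property for $H$-models with bounding function $f$, and let $J$ be an $H$-minor-free induced subgraph of $G$ with exactly $p$ vertices having a neighbor outside $V(J)$. Suppose for contradiction that $|J| \geq g(p) > g'(p)$. Then \autoref{fjw13} applies and produces a graph $G'$ with
\[
\tau_H(G') = \tau_H(G), \quad \nu_H(G') = \nu_H(G), \quad |G'| < |G|.
\]

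The final step is to check that $G'$ contradicts the minimality of $G$. Let $k := \nu_H(G) = \nu_H(G')$. Because $G$ is a counterexample we have $\tau_H(G') = \tau_H(G) > f(k) = f(\nu_H(G'))$, so $G'$ also satisfies condition (i) in the definition of a counterexample, and it does so with the same value of $\nu_H$. Therefore condition (ii) is still met at the value $k$, and $G'$ is a strictly smaller graph satisfying the same constraints as $G$, contradicting the minimality requirement (iii). Hence no such $J$ exists with $|J| \geq g(p)$, which proves the corollary.

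There is no real obstacle here: the work has already been done in \autoref{fjw13}, and the corollary is just the observation that replacing $G$ by the smaller graph $G'$ produced by that theorem would violate the vertex-minimality (iii) of a minimal counterexample, since both $\nu_H$ and $\tau_H$ are preserved. The only minor care needed is to pass from $g'$ to the non-decreasing envelope $g$, which is a trivial modification.
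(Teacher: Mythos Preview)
Your proof is correct and follows essentially the same approach as the paper: define $g$ as the non-decreasing envelope of the function $g'$ from \autoref{fjw13}, then observe that if $|J| \geq g(p) \geq g'(p)$ the graph $G'$ produced by \autoref{fjw13} violates the vertex-minimality condition (iii) since $\nu_H$ and $\tau_H$ are preserved. The only cosmetic difference is that you add $+1$ in the definition of $g$, which is harmless but unnecessary (the weak inequality $g(p)\geq g'(p)$ already suffices to invoke \autoref{fjw13}).
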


\begin{proof}
Let $g'$ be the function in \autoref{fjw13}.
We define the function $g\colon \N\to \N$ as follows:
$g(k) = \max_{i \in \{0,\dots,k\}} g'(i)$.
Notice that $g(k) \geq g'(k)$ holds for every $k\in \N$ and that $g$ is non-decreasing. 
Now, suppose that $G$ is a graph having an $H$-minor free induced subgraph $J$ with exactly $p$ vertices having a neighbor in $V(G) \setminus V(J)$ and such that $|J| \geq g(p)$. 
Then, since $g(p) \geq g'(p)$, by \autoref{fjw13} there is a graph $G'$ such that $\tau_H(G') = \tau_H(G)$, $\nu_H(G') = \nu_H(G)$, and $|G'| < |G|$. 
In particular, $G$ cannot be a minimal counterexample to the \EP{} property for $H$-models for any bounding function $f$, a contradiction. 
\end{proof}

\subsection{Interplays between treewidth and the \texorpdfstring{\EP{}}{Erdös-Pósa} property}

Given a planar graph $H$, the standard approach to show that $H$-models satisfy the \EP{} property is to first note that $k\cdot H$ (the disjoint union of $k$ copies of $H$) is also planar. 
Thus, if $\nu_H(G) < k$ for a graph $G$, then the treewidth of $G$ is bounded by a function of $k$ and $H$, by the Grid Theorem~\cite{robertson1986graph}.  
Then one can use a tree decomposition of small width to find a small $H$-transversal of $G$. 
This was first used by Robertson and Seymour~\cite[Theorem~8.8]{robertson1986graph} in their original proof (see also~\cite[Theorem 3]{Thomassen1988} and the survey~\cite[Section~3]{Raymond2017}). 
It was subsequently used by several authors to obtain improved bounding functions, most notably by Chekuri and Chuzhoy~\cite{chekuri2013large} when deriving their \autoref{thm:ccep} from \autoref{thm:cctw}.   

As it was already mentioned in the introduction when discussing \autoref{conj:tw}, the reverse direction holds as well: A bounding function for $H$-models translates directly to an upper bound on the treewidth of $(k\cdot H)$-minor free graphs, up to an additive term depending only on $H$:

\begin{lemma}\label{ceiling} 
Let $H$ be a planar graph, let $f$ be a bounding function for $H$-models, and let $c=c(H)$ be a constant such that $\tw(G) \leq c$ for every $H$-minor free graph $G$. 
Then, for every $k\geq 1$, every $(k\cdot H)$-minor free graph $G$ has treewidth at most~$f(k-1) + c$.
\end{lemma}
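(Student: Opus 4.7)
The plan is to reduce the bound on treewidth to the bound on the transversal, by separating $G$ into a small hitting set $X$ and an $H$-minor free remainder whose treewidth is already controlled by $c$.

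First I would observe that any $(k\cdot H)$-minor free graph $G$ satisfies $\nu_H(G) \leq k-1$: if $\mc{M}_1, \dots, \mc{M}_k$ were $k$ pairwise disjoint $H$-models in $G$, their union would realize $k\cdot H$ as a minor of $G$, contradicting the hypothesis. Hence $\nu_H(G) \leq k - 1$.

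Next, by the definition of a bounding function, there exists an $H$-transversal $X \subseteq V(G)$ of size at most $f(\nu_H(G)) \leq f(k-1)$. (Strictly speaking, this last inequality uses that $f$ is non-decreasing; one may freely replace $f$ by its monotone envelope $\tilde{f}(k) = \max_{j \leq k} f(j)$, which remains a bounding function and does not change the statement.) By construction, $G - X$ has no $H$-model and is therefore $H$-minor free, so by the choice of $c$ we have $\tw(G - X) \leq c$.

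Finally, I would lift a tree decomposition of $G - X$ of width at most $c$ to one of $G$ by inserting $X$ into every bag. The result is a valid tree decomposition of $G$: every vertex of $X$ appears in every bag, so every edge incident to $X$ is covered, and the covering conditions for vertices and edges inside $V(G) \setminus X$ are inherited from the decomposition of $G - X$. Each bag now has size at most $(c + 1) + |X|$, so the width is at most $c + |X| \leq c + f(k-1)$, as claimed. There is no real obstacle here; the argument is the standard ``add the transversal to every bag'' trick, and the only subtlety worth flagging is the tacit monotonicity of $f$.
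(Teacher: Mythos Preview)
Your proof is correct and follows essentially the same route as the paper: bound $\nu_H(G)\leq k-1$, extract a small transversal $X$ via the bounding function, note $\tw(G-X)\leq c$, and add $X$ to every bag. Your version is slightly more careful in explicitly flagging the tacit monotonicity of $f$, which the paper silently assumes.
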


\begin{proof} 
Let $G$ be a graph not containing $k\cdot H$ as a minor. 
Since $\nu_H(G) \leq k-1$ and $f$ is a bounding function for $H$-models, we deduce $\tau_H(G) \leq f(k-1)$.
That is, $G$ has a set $X$ of at most $f(k-1)$ vertices such that $G- X$ is $H$-minor free.
By definition of $c$, we have $\tw(G - X) \leq c$. Then $\tw(G) \leq c + |X| \leq c + f(k-1)$, as desired.
\end{proof}

Thus combining our main result with \autoref{ceiling} gives \autoref{cor:twbound} (stated in the introduction).

We also include the following lemma, which states that if $H'$ is a minor of $H$, then a bounding function for $H'$-models can be easily obtained from a bounding function for $H$-models. 

\begin{lemma}\label{lemma:monotone} 
Let $H$ be a fixed planar graph, let $f$ be a bounding function for $H$-models, and let $c=c(H)$ be a constant such that $\tw(G) \leq c$ for all $H$-minor free graphs $G$.
If $H'$ is a minor of $H$ with $q$ connected components, then $k \mapsto f(k) + (qk-1)(c+1)$ is a $O(f(k))$ bounding function for $H'$-models.
\end{lemma}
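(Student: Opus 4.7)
My plan is to construct an $H'$-transversal of $G$ in two stages. Let $k := \nu_{H'}(G)$, and assume $k \geq 1$ (the case $k = 0$ being trivial). First, I would use that $H' \preceq H$: given any $H$-model $\{S_x : x \in V(H)\}$ in $G$ and a fixed $H'$-model $\{T_y : y \in V(H')\}$ in $H$, the collection $\bigl\{\bigcup_{x \in T_y} S_x : y \in V(H')\bigr\}$ is an $H'$-model in $G$, so $\nu_H(G) \leq \nu_{H'}(G) = k$. Assuming without loss of generality that $f$ is non-decreasing (by replacing it with its running maximum if needed), this yields $\tau_H(G) \leq f(k)$, so I would fix an $H$-transversal $X \subseteq V(G)$ with $|X| \leq f(k)$. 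Since $G - X$ is $H$-minor free, $\tw(G - X) \leq c$.

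Second, I would augment $X$ with a set $Y \subseteq V(G - X)$ that destroys all remaining $H'$-models, aiming for $|Y| \leq (qk - 1)(c+1)$. Fix a tree decomposition $(T, \{B_t\})$ of $G - X$ of width at most $c$, so $|B_t| \leq c+1$ for every $t$. I would iteratively add bags of $T$ to $Y$, each chosen to intersect an $H'$-model of the current graph. A useful fact is that removing the vertices of any $H'$-model $M$ from a graph $G'$ strictly decreases $\nu_{H'}$: otherwise combining $M$ with a maximum $H'$-packing in $G' - V(M)$ would exhibit $\nu_{H'}(G') + 1$ pairwise disjoint $H'$-models, a contradiction. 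Building on this fact with a tree-decomposition-aware choice of bags, exploiting the $q$-component structure of $H'$-models and the Helly property for subtrees in a tree, one can cap the total number of bags put into $Y$ at $qk - 1$, yielding $|Y| \leq (qk - 1)(c+1)$.

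The central technical obstacle is matching the exact count $qk - 1$: a naive greedy would only give a bound of the order $qk \cdot (c+1)$, and the single-bag saving must come from an overlap between subtrees of $T$ corresponding to components of distinct $H'$-models, whose existence in the relevant situations is guaranteed by Helly for subtrees. Once this sub-claim is in hand, $X \cup Y$ is an $H'$-transversal of $G$ of size at most $f(k) + (qk - 1)(c+1)$. Since any bounding function satisfies $f(k) \geq k$ (witness $G = k \cdot H$, for which $\nu_H(G) = \tau_H(G) = k$), we have $(qk - 1)(c+1) = O(k) = O(f(k))$, so the total bound is indeed $O(f(k))$.
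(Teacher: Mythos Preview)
Your overall approach matches the paper's: reduce to $\nu_H(G)\le k$ via $H'\preceq H$, extract an $H$-transversal $X$ of size $\le f(k)$, note $\tw(G-X)\le c$, and then bound $\tau_{H'}(G-X)$ by $(qk-1)(c+1)$. The paper, however, does not attempt to prove this last bound from scratch; it simply invokes Theorem~8.8 of Robertson and Seymour~\cite{robertson1986graph}, which states exactly that $\tau_{H'}(G')\le (qk-1)(\tw(G')+1)$ when $\nu_{H'}(G')\le k$ and $H'$ has $q$ components.

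Your proposal instead sketches a direct argument for this bound, but that sketch is where the gap lies. You correctly observe that a naive greedy gives only $qk$ bags, and you claim the missing bag can be recovered ``by Helly for subtrees,'' yet you do not actually carry this out. The Robertson--Seymour proof is more delicate than a single Helly application: one works bottom-up in a rooted tree decomposition, repeatedly locating a lowest bag above which a component-model first appears, and a careful accounting over all $q$ components and all $k$ packing rounds is what yields $qk-1$ rather than $qk$. Absent that accounting, your second stage is not a proof but a pointer to one. Either supply the full argument or, as the paper does, cite~\cite[Theorem~8.8]{robertson1986graph}. (Two minor remarks: replacing $f$ by its running maximum is unnecessary if you argue, as the paper does, for all $k\ge \nu_{H'}(G)$ rather than only $k=\nu_{H'}(G)$; and your justification of $f(k)\ge k$ via $G=k\cdot H$ is fine and slightly more explicit than the paper's.)
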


\begin{proof}
Let $G$ be a graph with $\nu_{H'}(G) \leq k$. As $H'$ is a minor of $H$, we deduce $\nu_H(G) \leq k$. By definition of $f$, 
there is a set $X$ of at most $f(k)$ vertices such that $G- X$ is $H$-minor free. Hence $\tw(G-X) \leq c$.
Theorem 8.8 in~\cite{robertson1986graph} provides the following upper-bound on $\tau$ in graphs of bounded treewidth: 
\[\tau_{H'}(G - X) \leq (qk-1)(c+1).\]
Then, $\tau_{H'}(G) \leq \tau_{H'}(G - X) + |X| \leq (qk-1)(c+1) + f(k)$. 
Finally, since $f(k) \geq k$, we deduce $(qk-1)(c+1) + f(k) = O(f(k))$.
\end{proof}

\section{The Proof for Wheels} \label{sec:proof}

In this section, we prove our main theorem:

{
\renewcommand{\thetheorem}{\ref*{thm:main}}
\begin{theorem}
For each integer $t \geq 3$, the \EP{} property holds for $W_t$-models with a $O(k \log k)$ bounding function.
\end{theorem}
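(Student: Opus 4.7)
Set $f(k) = \gamma k \log(k+1)$ for a constant $\gamma = \gamma(t)$ to be tuned, and argue by contradiction. Let $G$ be a minimal counterexample in the sense of the definition preceding \autoref{lem:min-cex} and write $k := \nu_{W_t}(G)$; by \autoref{lem:min-cex} there is a polynomial $\rho$ with $|V(G)| \leq \rho(k)$, so $\log|V(G)| = O(\log k)$. The heart of the argument is to produce a single $W_t$-model $M$ in $G$ with $|V(M)| \leq \eta \log k$ for a constant $\eta = \eta(t)$. Given such an $M$, any $W_t$-packing of $G - V(M)$ extends (by adjoining $M$) to a packing of $G$, so $\nu_{W_t}(G - V(M)) \leq k - 1$; the minimality of $G$ with respect to $\nu_{W_t}$ yields $\tau_{W_t}(G - V(M)) \leq f(k-1)$, and adding $V(M)$ to such a transversal gives $\tau_{W_t}(G) \leq f(k-1) + \eta \log k \leq f(k)$ provided $\gamma$ is chosen large in terms of $\eta$, contradicting $\tau_{W_t}(G) > f(k)$.

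To produce $M$, the plan is to route through a complete minor. Observe that $W_t$ embeds as a subgraph of $K_{t+1}$: for $t \geq 3$, any vertex of $K_{t+1}$ may serve as the hub, and a Hamiltonian cycle on the remaining $t$ vertices (which span a $K_t$) serves as the rim. Hence a $K_{t+1}$-model in $G$ automatically contains a $W_t$-model on the same vertex set, and it suffices to find a $K_{t+1}$-model of size $O(\log|V(G)|) = O(\log k)$. By \autoref{thm:small-minors} applied with parameter $t+1$, this is automatic as soon as $G$ has average degree at least $\varphi(t+1)$, a constant depending only on $t$.

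The main obstacle, and the bulk of the technical work, is therefore to show that a minimal counterexample has average degree at least $\varphi(t+1)$. The plan is to pit \autoref{protrusion} against the minor-minimality of $G$: if too many vertices of $G$ had small degree, one should be able to isolate a large $W_t$-minor-free induced subgraph of $G$ whose boundary to the rest of $G$ has bounded size, contradicting \autoref{protrusion}. Making this precise will require a careful analysis of the local structure around low-degree vertices, and we expect that the specific form of a wheel---a hub joined to a cycle---is exploited here to certify that the candidate local piece really is $W_t$-minor-free. Once the average-degree lower bound is in hand, the remaining steps reduce to the clean chain \autoref{thm:small-minors} $\Rightarrow$ small $K_{t+1}$-model $\Rightarrow$ small $W_t$-model $\Rightarrow$ inductive closure.
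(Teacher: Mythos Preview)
Your overall framework is right and matches the paper: take a minimal counterexample $G$ with $k=\nu_{W_t}(G)$, use \autoref{lem:min-cex} to get $|G|\le \rho(k)$, and reduce the problem to exhibiting a single $W_t$-model of size $O(\log |G|)$. The paper proceeds exactly this way (see \autoref{cl:small-model}).

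The gap is in your plan for producing that small model. You propose to apply \autoref{thm:small-minors} directly to $G$, which requires showing that $G$ has average degree at least $\varphi(t+1)$. Your sketch for this---``if too many vertices had small degree, one should be able to isolate a large $W_t$-minor-free induced subgraph with bounded boundary''---is not an argument, and it is far from clear how to make it one. Low average degree in $G$ does not by itself localize: the low-degree vertices may be spread throughout $G$, and the subgraph they induce can still contain $W_t$-models. Minor-minimality only forces minimum degree $\geq 3$ (no degree-$1$ or degree-$2$ vertices survive, since $W_t$ has minimum degree $3$), which is far below the $\Theta(t\sqrt{\log t})$ threshold you need.

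The paper does \emph{not} show that $G$ has high average degree. Instead it builds a decomposition of $G$ into bounded-size \emph{pieces} (a maximal family $\mathcal{C}$ of vertex-disjoint cycles of length in $[c_1,c_2]$, then a maximal family $\mathcal{P}$ of length-$p$ paths in the remainder, then the residual components $\mathcal{R}$) and forms an auxiliary simple graph $H_{\mathrm{s}}$ on the ``central'' pieces, encoding adjacency and routing through noncentral pieces. A $W_t$-model of size $q$ in $H_{\mathrm{s}}$ lifts to a $W_t$-model of size $O(q)$ in $G$ (\autoref{claim:model}), so \autoref{thm:small-minors} is applied to $H_{\mathrm{s}}$, not to $G$. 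If $H_{\mathrm{s}}$ has low average degree, the paper finds a central piece $K\in\mathcal{C}\cup\mathcal{P}$ of small degree in a companion graph $H_{\mathrm{b}}$, uses Menger's theorem to bound the number of disjoint paths from $K$ to each neighboring piece (here the wheel structure and the cycle/path dichotomy for $K$ are genuinely exploited), and only then invokes \autoref{protrusion} on the neighborhood of $K$ to derive a contradiction. The piece decomposition and the two-case Menger argument are the missing ideas in your plan; without them, the passage from ``$G$ might have low average degree'' to ``some bounded-boundary region is $W_t$-minor-free and too large'' does not go through.
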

\addtocounter{theorem}{-1}
}
    
\begin{proof}
To keep track of the dependencies between the constants that we use, we define them here. Recall that $t$ denotes the number of spokes of the wheel that we are considering.
Let $\varphi$ and $\varphi'$ be constants such that every $n$-vertex graph of average degree at least $\varphi$ has a $K_{t+1}$-model on at most $\varphi' \log n$ vertices (both $\varphi$ and $\varphi'$ depend on $t$, see \autoref{thm:small-minors}).
Let $\alpha,\beta\geq 1$ be constants such that $\rho(n) \leq \alpha n^{\beta}$, for every $n \in \N\setminus \{0\}$, where $\rho$ is the polynomial of \autoref{lem:min-cex} for $H=W_t$.  
Let $g$ denote the function from \autoref{protrusion} for $H=W_t$. 

We then set
\begin{align*}
c_1 &=  g(2t\varphi^2), \quad p = g(2c_1\varphi^2), \quad c_2 = 4p,\\
\sigma &= \max\left\{3 \varphi' c_2, 2c_2 + tp, (2t^2p+1)(c_2+2c_1\varphi^2)\right\},\quad \text{and}\\
\epconstant{} &= \sigma(\beta  + \log \alpha).
\end{align*}

Observe that we have $t < c_1 < p < c_2$.
Let $f(k) := \epconstant{}  \cdot k\log(k+1)$, for every $k \in \N$. We show that the  \EP{} property holds for $W_t$-models with bounding function~$f$.

Arguing by contradiction, let $G$ be a minimal counterexample to the \EP{} property for $W_t$-models with bounding function~$f$. 
Let $k := \nu_{W_t}(G)$ Then $k\geq 1$ and $|G|$ is polynomial in $k$ by \autoref{lem:min-cex}. That is, letting $n := |G|$, we have $n \leq \alpha k^{\beta }$, for the constants $\alpha$ and $\beta $ defined above.

We first show that $G$ cannot contain a $W_t$-model of logarithmic size.
\begin{claim}\label{cl:small-model}
$G$ has no $W_t$-model of size at most~$\sigma \log n$. 
\end{claim}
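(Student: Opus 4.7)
The plan is to argue by contradiction, exploiting the $\nu$-minimality condition (ii) in the definition of a minimal counterexample. Suppose $G$ contains a $W_t$-model $\mathcal{M}$ with $|V(\mathcal{M})| \leq \sigma \log n$. The first observation is that deleting $V(\mathcal{M})$ strictly decreases the packing number: if $G - V(\mathcal{M})$ contained $k$ pairwise disjoint $W_t$-models, then together with $\mathcal{M}$ these would yield $k+1$ pairwise disjoint $W_t$-models in $G$, contradicting $\nu_{W_t}(G) = k$. Hence $\nu_{W_t}(G - V(\mathcal{M})) \leq k-1 < k$, so by condition (ii), $G - V(\mathcal{M})$ is not a counterexample and therefore $\tau_{W_t}(G - V(\mathcal{M})) \leq f(\nu_{W_t}(G - V(\mathcal{M}))) \leq f(k-1)$, using that $f$ is non-decreasing.

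The next step is to combine. Letting $X'$ be a $W_t$-transversal of $G - V(\mathcal{M})$ of size at most $f(k-1)$, the set $X := X' \cup V(\mathcal{M})$ is a $W_t$-transversal of $G$ of size at most $f(k-1) + \sigma \log n$. The desired contradiction with $\tau_{W_t}(G) > f(k)$ then reduces to verifying the numerical inequality $\sigma \log n \leq f(k) - f(k-1)$. Using the bound $n \leq \alpha k^{\beta}$ from \autoref{lem:min-cex}, this reduces to checking $\sigma(\log \alpha + \beta \log k) \leq \gamma\bigl[k \log(k+1) - (k-1)\log k\bigr]$. Since $k\log(k+1) - (k-1)\log k \geq \log(k+1)$ for every $k \geq 1$ (because $(k-1)\log(k+1) \geq (k-1)\log k$) and since $\log(k+1) \geq 1$ for $k \geq 1$, the inequality follows from the choice $\gamma = \sigma(\beta + \log \alpha)$: indeed $\log \alpha \leq \log \alpha \cdot \log(k+1)$ and $\beta \log k \leq \beta \log(k+1)$ term by term. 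The small-$k$ boundary case ($k=1$, where $f(0)=0$) is handled directly because $\sigma \log \alpha \leq \sigma(\beta + \log \alpha) = \gamma = f(1)$.

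The argument is essentially bookkeeping: the only substantive input is the minimality condition (ii), which upgrades the existence of a single small $W_t$-model into a transversal saving of roughly $\gamma \log k$ relative to $f(k)$. The main point to be careful about, and the only real obstacle, is precisely that the value of $\gamma$ chosen in the preamble of the proof is tight enough to absorb the overhead $\sigma \log n$ arising from the polynomial bound $n \leq \alpha k^{\beta}$; this is exactly what motivates the definition $\gamma = \sigma(\beta + \log \alpha)$. Once the inequality is verified, $X$ violates $\tau_{W_t}(G) > f(k)$, contradicting that $G$ is a counterexample and establishing the claim.
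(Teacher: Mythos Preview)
Your proof is correct and follows essentially the same route as the paper: delete a small model, use the $\nu$-minimality of $G$ to bound $\tau_{W_t}(G-V(\mathcal{M}))$ by $f(k-1)$, and then verify the arithmetic $\sigma\log n + f(k-1)\le f(k)$ via $\log n\le(\beta+\log\alpha)\log(k+1)$ and $k\log(k+1)-(k-1)\log k\ge\log(k+1)$. The separate discussion of $k=1$ is unnecessary (your general inequality already covers it), but harmless.
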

\begin{proof}
Towards a contradiction, we consider a $W_t$-model $\mathcal{M}$ of size at most $\sigma \log n$. 
Notice that $\log n \leq (\beta  + \log \alpha)\log (k+1)$ can be deduced from the aforementioned upper-bound on $n$.
Since $\nu_{W_t}(G - V(\mathcal{M})) \le k-1$, by minimality of $G$, 
\begin{align*}\tau_{W_t}(G) &\le |V(\mathcal{M})| + \tau_{W_t}(G - V(\mathcal{M}))\\
              &\le \sigma \log n + f(k-1)\\
              &\le \epconstant{} \log (k+1) + \epconstant{} (k-1) \log k\\
              &\le \epconstant{} k \log (k+1) \leq f(k).
\end{align*}
However, this contradicts the fact that $G$ is a minimal counterexample to the \EP{} property for $W_t$-models with bounding function~$f$. 
\end{proof}
Note that since $\nu_{W_t}(G) \geq 1$, we have $n\geq t+1\geq 2$, and thus $\log n \geq 1$ (recall that all logarithms are in base $2$). Thus, \autoref{cl:small-model} implies in particular that $G$ has no $W_t$-model of size at most $\sigma$.

Let $\mc C$ be a maximum-size collection of vertex-disjoint cycles in $G$ whose lengths are in the interval $[c_1,c_2]$. Let $\mc P$ be a maximum-size collection of vertex-disjoint paths of length $p$ in $G - V(\mc C)$, 
where $V(\mc C) := \bigcup_{C \in \mc C} V(C)$. 
(In this paper, the length of a path is defined as its number of edges.) 
Finally, let $\mc R$ be the collection of   components of $G - (V(\mc C) \cup V(\mc P))$ and let $V(\mc R) := V(G) - (V(\mc C) \cup V(\mc P))$, where $V(\mc P) := \bigcup_{P \in \mc P} V(P)$. We point out that the cycles in $\mc{C}$ and the paths in $\mc{P}$ are subgraphs of $G$ but not necessarily induced subgraphs of $G$, while the components in $\mc{R}$ are induced subgraphs of $G$. 
We call the elements of $\mc{C} \cup \mc{P} \cup \mc{R}$ \emph{pieces}.

Observe that, by maximality of $\mc P$, every path in a piece of  $\mc R$ has length at most $p-1$. This implies that each such piece is $W_t$-minor free. 
Indeed, observe that if such a piece $R$ has a $W_t$-model then $R$ contains a subgraph consisting of a cycle $C$ and a rooted tree $T$ such that $T$ has at most $t$ leaves, $V(T) \cap V(C)=\emptyset$, and the leaves of $T$ collectively have at least $t$ neighbours in $C$. 
The cycle $C$ has at most $p$ vertices, and each root-to-leaf path in $T$ has at most $p$ vertices. 
Thus, this gives a $W_t$-model with at most $(t+1)p$ vertices. 
However, this contradicts \autoref{cl:small-model} since $(t+1)p \le \sigma $. 

Similarly, each piece in  $\mc C$ (in $\mc P$, respectively) has at most $c_2$ ($p$, respectively) vertices, and these vertices induce a $W_t$-minor free subgraph of~$G$; otherwise, there would exist a $W_t$-model of size at most $c_2$ (resp.~$p$), again a contradiction to \autoref{cl:small-model} since $p \leq c_2 \leq \sigma$. 
These  facts will be used often in the rest of the proof. 

We say that two distinct pieces $K$ and $K'$ \emph{touch} if some edge of $G$ links some vertex of $K$ to some vertex of $K'$. Note that, by construction, two distinct pieces in $\mc{R}$ cannot touch.  A piece is said to be \emph{central} if it is a cycle in $\mc{C}$, a path in $\mc{P}$, or a piece in $\mc{R}$ that touches at least $2\varphi$ other pieces.
In the next paragraph, we define two auxiliary simple graphs $\Gsmall$ (for small degrees) and $\Gbig$ (for big degrees) that model how the central pieces are connected through the noncentral pieces. To keep track of the correspondence between the edges of $\Gsmall$ and the noncentral pieces, we put labels on some of these edges.

Initialize both $\Gsmall$ and $\Gbig$ to the graph whose set of vertices is the set of central pieces and whose set of edges is empty. For each pair of central pieces that touch in $G$, add an (unlabeled) edge between the corresponding vertices in both $\Gsmall$ and $\Gbig$.

Next, while there is some noncentral piece $R \in \mc R$ that touches two central pieces $K$ and $K'$ that are not yet adjacent in $\Gbig$, call $\mathcal{Z}_R$ the set of central pieces that touch $R$ and do the following:
\begin{enumerate}
    \item \label{enum:gb} add all (unlabeled) edges to $\Gbig$ between pieces of $\mathcal{Z}_R$ (not already present in $\Gbig$). This creates a clique on vertex set $\mathcal{Z}_R$ in $\Gbig$, some of whose edges might have already been there before.
    \item \label{enum:gs} choose a piece $K\in \mathcal{Z}_R$ such that the number of newly added edges of $\Gbig$ incident to $K$ is maximum. Add to $\Gsmall$ every edge that links $K$ to another piece of $\mathcal{Z}_R$ (not already present in $\Gsmall$), and label it with $R$. This creates a star centered at $K$ in $\Gsmall$ with all its edges labeled with the noncentral piece $R$.
\end{enumerate}

\begin{figure}[h]
    \centering
    \begin{tikzpicture}[every node/.style = black node]
    \begin{scope}
        \draw
            (0,0) node (a) {}
            (1,0) node (b) {}
            (1,1) node (c) {}
            (0,1) node (d) {};
            \draw (a) -- (b) -- (c) -- (d);
        \draw (-1.25, 0.5) node[normal] {(a)};
    \end{scope}
    \begin{scope}[xshift = 4cm]
        \draw
            (0,0) node (a) {}
            (1,0) node (b) {}
            (1,1) node (c) {}
            (0,1) node (d) {};
            \draw (a) -- (b) -- (c) -- (d);
            \draw[very thick, color = red!80!black]
                (d) -- (a)
                (a) -- (c)
                (d) -- (b);
    \draw (-1.25, 0.5) node[normal] {(b)};
    \end{scope}
    \begin{scope}[xshift = 8cm]
        \draw
            (0,0) node (a) {}
            (1,0) node (b) {}
            (1,1) node (c) {}
            (0,1) node[label = 180:$K$] (d) {};
            \draw (a) -- (b) -- (c) -- (d);
            \draw[very thick, color = red!80!black]
                (d) -- (a) node[normal, midway, label = 180:$R$] {}
                (d) -- (b) node[normal, midway, label = 45:$R$] {};
        \draw (-1.25, 0.5) node[normal] {(c)};
    \end{scope}
    \end{tikzpicture}
    \caption{Construction of $\Gsmall$ and $\Gbig$. (a):~the vertices of a set $\mathcal{Z}_R$ in $\Gsmall$ and $\Gbig$ before step~\eqref{enum:gb}. (b):~$\Gbig[\mathcal{Z}_R]$ after step~\eqref{enum:gb}. (c):~$\Gsmall[\mathcal{Z}_R]$ after step~\eqref{enum:gs}.}
    \label{fig:constr}
\end{figure}
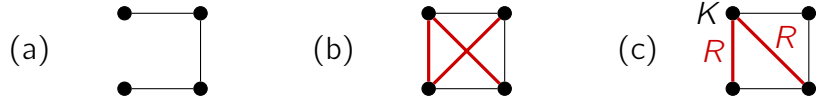

The edges added during these steps are depicted in thick red lines in the example of~\autoref{fig:constr}.
By construction, $\Gsmall$ is a subgraph of $\Gbig$. These graphs have the following two crucial properties.

\begin{claim}\label{claim:model}
If $\Gsmall$ has a $W_t$-model of size $q$, then $G$ has a  $W_t$-model of size at most $3c_2q$. 
\end{claim}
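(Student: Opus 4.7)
Proof plan for Claim \ref{claim:model}.

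The plan is to lift the given $W_t$-model $\mathcal{M}^* = \{S_x^*\}_{x \in V(W_t)}$ of size $q$ in $\Gsmall$ to a $W_t$-model $\mathcal{M} = \{S_x\}$ in $G$ by replacing each central piece (a vertex of $\Gsmall$) with a small representative subgraph of $G$, and by routing short paths through the noncentral pieces that appear as labels on the edges used in $\mathcal{M}^*$.

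First I would prune to a minimal edge set $F$ of $\Gsmall$: a spanning tree $T_x$ of $\Gsmall[S_x^*]$ for each $x$ together with one crossing edge $e_{xy}$ between $S_x^*$ and $S_y^*$ for each $xy \in E(W_t)$. Since $q \geq t+1$, $|F| = (q - t - 1) + 2t = q+t-1 \leq 2q$. For each central piece $K$ used in $\mathcal{M}^*$ I would pick a representative $\sigma(K) \subseteq V(K)$. If $K$ is a cycle in $\mc{C}$ or a path in $\mc{P}$, simply take $\sigma(K) := V(K)$; then $|\sigma(K)| \leq c_2$, since cycles in $\mc{C}$ have at most $c_2$ vertices and paths in $\mc{P}$ have $p+1 \leq c_2$ vertices. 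If $K$ is a central piece of $\mc{R}$, designate for each $F$-edge $e$ incident to $K$ a ``port'' vertex $b_K(e) \in V(K)$ adjacent to the relevant vertex on the other side of $e$ (a vertex of the neighbouring central piece if $e$ is unlabeled, or a vertex of the labeling noncentral piece if $e$ is labeled), and let $\sigma(K)$ be a Steiner tree in $K$ spanning the port set $B(K) := \{b_K(e)\}$. The maximality of $\mc{P}$ forces every path inside $K$ to have length at most $p-1$, so such a tree can be constructed (as a union of $|B(K)|-1$ shortest paths from a fixed port) with at most $1 + (|B(K)|-1)(p-1) \leq \deg_F(K)\cdot p$ vertices.

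For each labeled edge $e \in F$ with label $R \in \mc{R}$, let $K(R)$ denote the star center and $K'$ the other endpoint of $e$. I would pick vertices $u_R^e, v_R^e \in V(R)$ adjacent respectively to $b_{K(R)}(e)$ and $b_{K'}(e)$, and take a shortest $u_R^e$-$v_R^e$ path $P_e$ in $R$; again by maximality of $\mc{P}$, $|P_e| \leq p$. The final branch sets are
\[
  S_x := \bigcup_{K \in S_x^*} \sigma(K) \ \cup\ \bigcup_{\substack{e \in F\text{ labeled with }R\\ K(R) \in S_x^*}} V(P_e).
\]
Disjointness of the $S_x$ follows because central pieces are pairwise vertex-disjoint in $G$, noncentral pieces are disjoint from the central ones and from one another, and for every noncentral $R$ all paths $P_e$ with label $R$ are assigned to the single branch set containing $K(R)$. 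Connectivity of each $S_x$, and the existence of a $G$-edge between $S_x$ and $S_y$ for every $xy \in E(W_t)$, follow by walking through $T_x$ and through $e_{xy}$: each such edge is realized in $G$ either by a direct port-to-port $G$-edge (unlabeled case) or by the path $P_e$ whose endpoints are adjacent to the ports on both sides (labeled case).

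Finally I would count. Cycle and path representatives contribute at most $c_2 q$ vertices. Central-$\mc{R}$ representatives contribute $\sum_{K^*} |B(K^*)|\cdot p \leq 2p|F|$, since each $F$-edge contributes at most twice to this sum (once per endpoint). Labeled-edge paths contribute at most $p|F|$. Using $|F|\leq 2q$ and $c_2 = 4p$,
\[
  |V(\mathcal{M})| \leq c_2 q + 3p|F| \leq c_2 q + 6pq = c_2 q + \tfrac{3}{2}c_2 q \leq 3 c_2 q,
\]
as required. The hard part is the coherent choice of ports and of the endpoints $u_R^e, v_R^e$: one needs the Steiner tree inside each central $\mc{R}$-piece and the path inside each labeling noncentral piece to agree on which concrete $G$-edge realizes each edge of $F$. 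Once these choices are made compatibly, the connectivity, disjointness, and counting reduce to routine verification.
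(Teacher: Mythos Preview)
Your proposal is correct and follows essentially the same approach as the paper: replace each central piece in the $\Gsmall$-model by a bounded-size representative in $G$, route each edge of the model either directly or through a short path in the labelling noncentral piece, and count using $|F|\leq 2q$ together with $p\leq c_2$. The only cosmetic difference is that the paper picks a single vertex $v_K$ for each central $\mc{R}$-piece $K$ and charges the connections out of $K$ to the edge-paths, whereas you precompute a Steiner tree on ports inside $K$; since central $\mc{R}$-pieces are never endpoints of labeled edges (all of $\mathcal{Z}_R$ lies in $\mc{C}\cup\mc{P}$), both bookkeeping schemes yield the same $3c_2q$ bound.
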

\begin{proof}
\newcommand{\Msmall}{M_{\mathrm{s}}}
Suppose that $\Gsmall$ has a $W_t$-model of size $q$. 
Then there exists a subgraph $\Msmall \subseteq \Gsmall$ with $q$ vertices that can be contracted to $W_t$. 
We may assume that the average degree of $\Msmall$ is at most that of $W_t$, and hence at most~$4$. 
That is, $|E(\Msmall)| \le 2 |\Msmall|$. From the subgraph $\Msmall$, we construct a subgraph $M \subseteq G$ that can be contracted to $\Msmall$, and thus also to $W_t$. 

First, for each central piece $K \in V(\Msmall) \cap (\mc{C} \cup \mc{P})$, we add all its vertices to $M$, as well as $|K|-1$ edges from $K$ in such a way that the subgraph of $M$ induced by $V(K)$ is connected. For each central piece $K \in V(\Msmall) \cap \mc{R}$, we choose some vertex $v_K \in K$ and add it to $M$. This creates at most $c_2|\Msmall| = c_2 q$ vertices in $M$. 

Second, for each unlabeled edge $KK'$ of $\Msmall$ with $K, K' \in V(\Msmall) \cap (\mc{C} \cup \mc{P})$, we choose some edge of $G$ linking $K$ to $K'$ and add it to $M$. This does not create any new vertex in $M$.

Third, for each edge $KK'$ of $\Msmall$ that has not been considered so far, we add to $M$ a path linking some vertex of $V(M) \cap V(K)$ to some vertex of $V(M) \cap V(K')$, as follows. If the edge $KK'$ is not labeled, then exactly one of its endpoints is a central piece in $\mc{R}$, say $K$.  
The path we add to $M$ links $v_K$ to some vertex of $K'$ and is a subgraph of $K$, except for the last edge and last vertex. Thus, this path has at most $p - 1$ internal vertices.  
If the edge $KK'$ is labeled with the noncentral piece $R \in \mc{R}$, then this edge is part of a star in $\Msmall$ whose edges are all labeled with $R$. We may assume without loss of generality that $K$ is the center of this star. 
In this case, the path we add to $M$ links some vertex of $K$ to some vertex of $K'$ and has all its internal vertices in $R$. 
Thus, this path has at most $p$ internal vertices. 

In total, the addition of these paths to $M$ creates at most $p |E(\Msmall)| \leq 2 c_2 |\Msmall| = 2 c_2 q$ new vertices in $M$. 
The resulting subgraph $M$ has at most $c_2 |\Msmall| + p |E(\Msmall)| \le 3 c_2 q$ vertices. 
By construction, $M$ can be contracted to $\Msmall$, as desired. 
\end{proof}

\begin{claim}\label{claim:av}
The average degree of $\Gbig$ is at most $\varphi$ times the average degree of $\Gsmall$. The degree of each central piece of $\mc{R}$ in $\Gsmall$ is at least $2\varphi$.
\end{claim}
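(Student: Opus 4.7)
The plan is to handle the two statements separately, each by a direct inspection of the construction of $\Gsmall$ and $\Gbig$.

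For the second statement, I would fix a central piece $R \in \mc{R}$ and recall that by construction any two distinct pieces of $\mc{R}$ are pairwise non-touching in $G$. Hence every piece that touches $R$ lies in $\mc{C} \cup \mc{P}$, and in particular is central. Since $R$ is central, there are at least $2\varphi$ such pieces, and each of them contributes an edge incident to $R$ in $\Gsmall$ during the initialization step. I would then argue that the while loop never adds any edge incident to $R$ to $\Gsmall$: such an addition would require $R \in \mathcal{Z}_{R'}$ for some noncentral $R' \in \mc{R}$ being processed, which would force the two distinct pieces $R, R' \in \mc{R}$ to touch, a contradiction. Therefore the degree of $R$ in $\Gsmall$ is exactly the number of central pieces touching $R$, which is at least $2\varphi$.

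For the first statement, since $\Gsmall$ and $\Gbig$ have the same vertex set it is enough to show $|E(\Gbig)| \leq \varphi \cdot |E(\Gsmall)|$. The initial edges are common to both graphs, so the task reduces to comparing how many edges each iteration of the while loop adds to each graph. Consider an iteration processing a noncentral piece $R \in \mc{R}$, and let $s := |\mathcal{Z}_R|$. Let $e$ be the number of edges newly added to $\Gbig$ in the clique-addition step. These all have both endpoints in $\mathcal{Z}_R$, so the sum over $\mathcal{Z}_R$ of the number of such new edges incident to each vertex equals $2e$; by averaging, some vertex of $\mathcal{Z}_R$, and hence the vertex $K$ chosen in the star-addition step, is incident to at least $2e/s$ of them. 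The crucial observation is the invariant $\Gsmall \subseteq \Gbig$, preserved at every step because the star added to $\Gsmall$ is always a subset of the corresponding clique added to $\Gbig$. This invariant ensures that the above $\geq 2e/s$ newly added $\Gbig$-edges incident to $K$ were not in $\Gsmall$ at the start of the iteration, and hence are all added to $\Gsmall$ in the star-addition step. The per-iteration inequality
\[
e \;\leq\; \tfrac{s}{2} \cdot \bigl(\text{number of edges added to }\Gsmall\text{ in this iteration}\bigr)
\]
follows.

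It remains to bound $s$. Because $R$ is a noncentral piece of $\mc{R}$, it touches fewer than $2\varphi$ pieces of $G$ in total, so $s < 2\varphi$ and $s/2 < \varphi$. Substituting this into the per-iteration inequality, summing over all iterations, and adding the common initial edges (together with $\varphi \geq 1$) yields $|E(\Gbig)| \leq \varphi \cdot |E(\Gsmall)|$, which is equivalent to the first statement. The only delicate point of the whole argument is maintaining $\Gsmall \subseteq \Gbig$ and combining it with the greedy choice of $K$; the rest is routine double-counting.
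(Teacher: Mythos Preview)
Your proposal is correct and follows essentially the same approach as the paper's proof: for the second statement you use that pieces of $\mc{R}$ do not touch to see every neighbor of a central $R\in\mc{R}$ is a central piece in $\mc{C}\cup\mc{P}$, and for the first statement you compare, per iteration of the while loop, the number of edges added to $\Gbig$ versus $\Gsmall$ via the averaging/pigeonhole bound $2e/s$ together with $s<2\varphi$. The paper argues the same way, though it leaves implicit the invariant $\Gsmall\subseteq\Gbig$ (which you rightly single out as the key point ensuring the new-to-$\Gbig$ edges at $K$ are also new to $\Gsmall$) and the use of $\varphi\ge 1$ when folding in the common initial edges; your extra observation that the while loop adds no further edges incident to a central $R\in\mc{R}$ is correct but not needed, since only the lower bound on its degree is used.
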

\begin{proof} 
First, note that edges that appear in $\Gbig$ but not in $\Gsmall$ must be labeled. 
Let $R \in \mc R$ be a noncentral piece, and let $r$ be the number of pieces in $\mc C \cup \mc P$ it touches. By definition of noncentral pieces,  $r <2\varphi$. When $R$ is treated in the algorithm used to construct $\Gbig$ and $\Gsmall$, if $q$ new edges are added to $\Gbig$, then one of the pieces touched by $R$ is incident to at least $2q/r>q / \varphi$ of these new edges and thus at least $q / \varphi$ new edges are added to $\Gsmall$. This proves the first part of the claim. 

By definition, a piece $K$ of $\mc{R}$ is central if it touches at least $2\varphi$ other pieces. As two pieces of $\mc{R}$ cannot touch, $K$ touches at least $2\varphi$ pieces from $\mc C \cup \mc P$, that is, at least $2 \varphi$ other central pieces. Then in the first step of the construction of $\Gsmall$, all edges have been added from $K$ to these pieces.
\end{proof}

If the average degree of $\Gsmall$ is at least $\varphi$, then by definition of $\varphi$ and $\varphi'$ at the beginning of the proof, $\Gsmall$ has a $K_{t+1}$-model of size at most $\varphi'\log |\Gsmall|$, and thus in particular a $W_t$-model of size at most $\varphi'\log |\Gsmall|$. By \autoref{claim:model}, this gives a $W_t$-model of size at most $3\varphi'c_2 \log |\Gsmall| \leq 3\varphi' c_2 \log n$ in $G$, a contradiction to \autoref{cl:small-model} since $3\varphi' c_2 \leq \sigma$.

Thus, the average degree of $\Gsmall$ is smaller than $\varphi$. 
Hence, by \autoref{claim:av}, the average degree of $\Gbig$ is  smaller than $\varphi^2$. 
Then strictly more than half of the central pieces have degree less than $2\varphi$ in $\Gsmall$ (otherwise at least half of the vertices of $\Gsmall$ have degree at least $2\varphi$, a contradiction to the fact that $\Gsmall$ has average degree less than $\varphi$). Similarly, strictly more than half of the central pieces have degree less than $2\varphi^2$ in $\Gbig$. 
Thus there is a central piece whose degree in $\Gsmall$ is less than $2\varphi$, and whose degree in $\Gbig$ is less than $2\varphi^2$. Choose such a piece $K$. By \autoref{claim:av} (second part of the statement), $K$ is either   in $\mc{C}$ or   in $\mc{P}$.

In the rest of the proof we use the fact that $K$ has degree less than $2\varphi^2$ in $\Gbig$ 
to find a $W_t$-model of size at most $\sigma \log n$, contradicting \autoref{cl:small-model}. 

If $K$ is the unique central piece in $\mc C \cup \mc P$, then $V(K)$ is a $W_t$-transversal of $G$ since each piece in $\mc R$ is $W_t$-minor free. 
Thus $\tau_{W_t}(G) \leq |K| \leq c_2 \leq f(k)$, contradicting the fact that $G$ is a counterexample. 

For each central piece $K'$ adjacent to $K$ in $\Gbig$, we consider the collection $\mc R_{K,K'}$ of all noncentral pieces $R \in \mc{R}$ that touch both $K$ and $K'$ ($\mc R_{K,K'}$ might be empty). Then we consider the subgraph $G_{K'}$ of $G$ induced by $V(K) \cup V(K') \cup V(\mc{R}_{K,K'})$. 

Let $q$ be an integer equal to $t$ if $K \in \mc C$, to $c_1$ if $K \in \mc P$.

Our next goal is to show that for every central piece $K'$ adjacent to $K$ in $\Gbig$, there exists a set of  strictly less than $q$  vertices that separates $K$ from $K'$ in $G_{K'}$. 
Thus fix a piece  $K'$ adjacent to $K$ in $\Gbig$. 
By Menger's theorem, it suffices to show  that the maximum number of vertex-disjoint $K$--$K'$ paths in $G_{K'}$ is strictly less than $q$. 
Assume  for contradiction that $G_{K'}$ contains $q$ vertex-disjoint $K$--$K'$ paths. 

By taking the paths to be as short as possible, we may assume that only their endpoints are in $K$ and $K'$, all their internal vertices are in pieces in $\mc R_{K,K'}$, and each such path intersects at most one piece in $\mc R_{K,K'}$ and thus has length at most $p+1$. 

Assume first that $K \in \mc C$, and so $q=t$.
In this case $G_{K'}$ contains a small $W_t$-model as follows.  Let $T$ be a smallest tree in $K'$ containing all the endpoints of our paths in $K'$.
The center vertex of the wheel is then modeled by the union of $T$ and the $t$ $K$--$K'$ paths (minus their endpoints in $K$). 
If $K' \in \mc C \cup \mc P$, then obviously $|V(T)| \leq c_2$, and the model  thus has at most $2c_2+tp$ vertices. 
If $K' \in \mc R$, then $|V(T)| \leq tp$ since each path in $K'$ has length at most $p-1$; moreover, $\mc R_{K,K'}$ is empty in this case, implying that the model has at most $c_2+tp$ vertices. 
Therefore, in both cases the resulting model has at most $2c_2+tp$ vertices, which contradicts  \autoref{cl:small-model} since  $2c_2+tp \leq \sigma$.
 
Assume now that $K \in \mc P$.    
Since $t<c_1$, by the previous case we may assume that $K' \in \mc P \cup \mc R$.  Since there are $q=c_1$ vertex-disjoint $K$--$K'$ paths in $G_{K'}$, two of these paths intersect $K$ on two vertices that are at distance at least $c_1-1$ on the path $K$, which allows us to construct a cycle in $G_{K'}$ of length at least $c_1$  and at most $4p$: The cycle might use all the vertices of $K$ and at most $p$ vertices of $K'$, which is at most $2p$ vertices, and might intersect at most two pieces of $\mc R_{K,K'}$, using at most $p$ vertices in each of them. 
This is a contradiction to the maximality of $\mc C$: The length of this cycle is in the interval $[c_1, c_2]$ and yet the cycle is vertex disjoint from all cycles in $\mc C$.

Therefore, for each $K'$ adjacent to $K$ in $\Gbig$,  there exists a set $X(K')$ with less than $q$  vertices meeting all the $K$--$K'$ paths in $G_{K'}$. 

Let $X := \bigcup_{K'} X(K')$ where the union is taken over all central pieces $K'$  adjacent to $K$ in $\Gbig$. Note that $|X| \le 2 q \varphi^2$ since there are at most $2\varphi^2$   such pieces $K'$ and for every $K'$ we have  $|X(K')| \le q$. 

We also note that $X$ separates $K$ from all other central pieces in $G$. To see this, let $K''$ be a central piece distinct from $K$ and let $Q$ be a
$K''$--$K$ path in $G$.  Let $K'$ be the last central piece that $Q$ meets before reaching $K$. Then $Q$ contains a $K'$--$K$ path that is contained in $G_{K'}$, which must contain a vertex from $X$.

Let $J$ be the union of the components of $G - X$ that intersect $K$.
Observe that $V(K)$ is not completely included in $X$: If $K \in \mc C$, then $|K| \ge c_1 > 2t\varphi^2 \ge |X|$, and if $K \in \mc P$, then $|K| = p > 2c_1\varphi^2 \ge |X|$. Thus $J$ is not empty. Note also that $X$ separates $J$ from the rest of the graph.

Suppose that the subgraph $G'$ of $G$ induced by $X \cup V(J)$ is $W_t$-minor free. Thus, by \autoref{protrusion}, $|G'| < g(|X|)$. We deduce
\begin{align*}
|X| + |J| &< g(|X|) &\text{since}\ |G'| = |X| + |J|\\
|K| &< g(|X|) &\text{since}\ |J| \ge |K|-|X|\\
|K|    &< g(2q\varphi^2) &\text{since $g$ is non-decreasing}.
\end{align*}

Hence, if $K \in \mc C$, then $c_1 \leq |K| < g(2t\varphi^2)$,
and if $K \in \mc P$, then $p \leq |K| < g(2c_1\varphi^2)$. 
Since $c_1=g(2t\varphi^2)$ and $p=g(2c_1\varphi^2)$, we get a contradiction in both cases. 

Thus, we may assume that $G'$ contains a $W_t$-model. 
Let $M$ be a subgraph of $G'$ containing $W_t$ as a minor with $|V(M)| + |E(M)|$ minimum. 
(We remark that here we take $M$ to be a subgraph instead of just a model as before because we will need to consider the edges of that subgraph in the proof.) %%% this sentence is not great
To finish the proof, it is now enough to prove that $M$ has  at most $\sigma\log n$ vertices, since by \autoref{cl:small-model} this will give us the desired contradiction.

Let $R(J):=J[V(\mc R)]$. Thus $R(J)$ consists of a number of disjoint pieces or subgraphs of pieces of $\mc R$.
Note that $M$ might use all vertices of $V(K) \cup X$ (which is fine); what we need to prove is that it cannot use too many vertices of $R(J)$.  

First, suppose that $M$ is fully contained in some piece of $\mc R$.
Since the vertices of $M$ can be covered with $2t$ paths, and each path in the piece has length less than $p$, it follows that $|M| \leq 2tp \leq \sigma$ and we are done. 

Thus we may assume that $M$ is not contained in some piece of $\mc R$, and thus in particular $M$ is not contained in $R(J)$ (since $M$ is connected).  
By the above remark, we also know that each component of $M[V(R(J))]$ contains at most $2tp$ vertices.
Since $M$ has maximum degree at most $t$ (by minimality of $|V(M)| + |E(M)|$), there are at most $t|V(K) \cup X|$ edges of $M$ with one endpoint in $V(K)\cup X$ and the other in $R(J)$. Hence $M$ intersects $R(J)$ on at most $2t^2p|V(K) \cup X|$ vertices. 
Therefore, $M$ has at most $2t^2p|V(K) \cup X| + |V(K) \cup X|$ vertices. 
Since $|V(K) \cup X| \le |K|+|X| \le c_2 + 2c_1 \varphi^2$, we deduce that $|M| \le (2t^2p+1)(c_2 + 2c_1 \varphi^2)\leq \sigma$, as desired. 
\end{proof}

+\section{Conclusion} \label{sec:conclusion}

One obvious extension of our result for wheels would be to prove it for all planar graphs.
Note that the first steps of our proof work for any such $H$: 
Starting with $G$ a minimal counterexample for some bounding function and some value $k$, we have that $G$ has $n \leq k^{O(1)}$ vertices. 
Thus, in order to get a contradiction, it is enough to show that there is a $O(\log n)$-size $H$-model in $G$. 
Unfortunately, the rest of our proof is specific to wheels and does not generalize. 

Let us mention another possible extension of our result. 
Strengthening the $O(k \log k)$ bound from~\cite{FH14}, Mousset, Noever, \v Skori\'c, and Weissenberger~\cite{MNSW17} recently showed that there is a constant $c > 0$ such that for every $\ell \geq 3$, models of the $\ell$-cycle $C_{\ell}$ have the \EP{} property with bounding function $c k\log k + \ell k$. 
In particular, the constant $c$ in front of the $k\log k$ term is independent of $\ell$. 
We expect that a similar property holds for wheels: 

\begin{conjecture}
There is a constant $c>0$ and a function $g:\N \to \N$ such that for all integers $t \geq 3$, $W_t$-models have the \EP{} property with bounding function 
\[
c k \log k + g(t) k.
\]
\end{conjecture}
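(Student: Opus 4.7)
The plan is to refine the argument of \autoref{thm:main} so that the coefficient of $k\log k$ becomes universal, with all $t$-dependence absorbed into the $g(t)k$ term. I keep the inductive framework of that proof: take a minimal counterexample $G$ to the \EP{} property for $W_t$-models with bounding function $f(k) = c\, k\log(k+1) + g(t)\, k$, for a universal constant $c$ and a function $g$ to be chosen. Let $k = \nu_{W_t}(G)$ and $n = |G|$. To reach a contradiction it suffices, as in \autoref{cl:small-model}, to extract a $W_t$-model of size at most $c_0 \log n + g_0(t)$ in $G$ for a universal constant $c_0$: combined with $\tau_{W_t}(G - V(\mc{M})) \leq f(k-1)$ this yields $\tau_{W_t}(G) \leq f(k)$, contradicting the assumption on $G$.

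First, I would revisit \autoref{lem:min-cex} and try to guarantee that $|G| \leq \alpha(t)\, k^{\beta}$ with $\beta$ a universal constant independent of $t$. This would follow from a version of \autoref{thm:fominkernel} whose polynomial kernel has degree independent of the excluded planar graph $H$, a feature that holds in several protrusion-based kernelization frameworks. With such a universal $\beta$, the bound $\log n \leq \beta \log(k+1) + \log \alpha(t)$ lets the additive $\log \alpha(t)$ term be absorbed into $g(t)$ during the induction.

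The main step is to strengthen the structural analysis of \autoref{thm:main} so that the extracted $W_t$-model has size $c_0 \log n + g_0(t)$ with $c_0$ universal. In the current proof, the worst contribution to the logarithmic coefficient comes from \autoref{claim:model} combined with the $K_{t+1}$-minor extraction: a $K_{t+1}$-model in $\Gsmall$ of size $\varphi'(t) \log n$ lifts to a $W_t$-model of size $3 c_2 \varphi'(t) \log n$ in $G$, and both $c_2$ and $\varphi'$ depend on $t$. I would replace this step by a direct wheel extraction that uses a long cycle $C \in \mc{C}$ of length in $[c_1(t), c_2(t)]$ as the rim of the wheel, and builds the hub from a small $K_4$-model together with $t$ internally-disjoint paths linking the hub to $t$ cyclically distributed vertices of $C$ through the pieces of $\mc{R}$ and $\mc{P}$. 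The small $K_4$-model and the connecting paths would be found using only universal average-degree thresholds, so their contribution to the $\log n$ part of the model has a universal coefficient, while the cycle and the short inner segments inside pieces of $\mc{R}\cup\mc{P}$ contribute at most $g_0(t)$ vertices in total.

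The main obstacle is this second step: turning the generic ``dense graph contains a small complete minor'' argument into a dedicated wheel construction whose $\log n$ coefficient does not depend on $t$. This amounts to routing $t$ vertex-disjoint short paths from a universal-size hub to prescribed cyclic positions of a long cycle, without paying a $\Theta(t)$ factor in the $\log n$-sized part of the model. A promising avenue is to exploit the piece decomposition $(\mc{C}, \mc{P}, \mc{R})$ already established in the proof of \autoref{thm:main}, together with a linkage argument inside the auxiliary graph $\Gbig$, to produce the required paths with only a universal logarithmic overhead. Verifying that all the auxiliary constants can be chosen universally while keeping the construction valid for every $t \geq 3$ is the delicate part of the plan.
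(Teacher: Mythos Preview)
The statement you are attempting to prove is a \emph{conjecture} in the paper; the authors explicitly leave it open and give no proof. So there is nothing to compare against, and your proposal must be judged as an attempted solution to an open problem.

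As such, what you have written is a research plan, not a proof, and the two load-bearing steps both contain genuine gaps.

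\textbf{The kernel degree.} You need $|G| \leq \alpha(t)\, k^{\beta}$ with $\beta$ independent of $t$. This traces back to \autoref{thm:fominkernel}: the polynomial $\pi$ there is allowed to depend on $H$, and in the Fomin--Lokshtanov--Misra--Saurabh framework the exponent of the kernel bound does depend on the excluded minor in general. You assert that ``several protrusion-based kernelization frameworks'' give a universal degree, but you do not cite one that applies here, nor do you argue why the specific structure of $W_t$ would force a universal exponent. Without this, $\log n \leq \beta(t)\log(k+1) + \log\alpha(t)$, and the $t$-dependence of $\beta$ infects the $k\log k$ coefficient.

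\textbf{The wheel extraction.} The heart of the difficulty is exactly the step you flag as ``delicate'', and you do not resolve it. In the paper's proof, the logarithmic-size model arises only in the dense case of $\Gsmall$, via a $K_{t+1}$-model of size $\varphi'(t)\log|\Gsmall|$ lifted through \autoref{claim:model} at cost $3c_2(t)$ per vertex. Your proposed replacement is to take a rim cycle $C\in\mc{C}$ and route $t$ vertex-disjoint paths from a universal-size hub to $t$ positions on $C$. But each such path, if it passes through the piece decomposition, has no a priori length bound better than what the auxiliary-graph argument gives; routing $t$ of them naively costs $\Theta(t)$ times whatever your per-path bound is, which reintroduces exactly the $t$-dependence you are trying to remove. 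You also do not say how the case analysis is restructured: the paper's dense/sparse threshold on $\Gsmall$ is $\varphi(t)$, and the sparse case relies on the constants $c_1, p, c_2$ which are defined through $\varphi(t)$ and the (possibly $t$-dependent) protrusion function $g$. You cannot simply keep that machinery with universal thresholds and still guarantee that $\mc{C}$ is nonempty or that the Menger-type separations go through.

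In short, you have correctly identified where the $t$-dependence enters the proof of \autoref{thm:main}, but neither of your two proposed fixes is carried out, and the second one is precisely the obstruction that makes this a conjecture rather than a theorem.
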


\textbf{Acknowledgements.} We thank Stefan van Zwam for valuable discussions at an early stage of this project. We also thank Hong Liu for spotting a slight inaccuracy in the proof of \autoref{thm:main} in an earlier version of this paper.  Finally, we thank the referees for their careful reading and many helpful comments.

%\bibliographystyle{abbrv}
%\bibliography{paper}

\end{document}